\documentclass[sigconf,nonacm]{acmart} 
\setcopyright{none}
\renewcommand\footnotetextcopyrightpermission[1]{}
\AtBeginDocument{%
  }

\usepackage{bbm}
\usepackage{enumitem}
\usepackage{adjustbox}
\newcommand{\myparagraph}[1]{\vspace{0.5mm} \noindent\textbf{#1}}
\usepackage{multirow}
\usepackage{booktabs}
\usepackage{graphicx}
\usepackage{array}
\newcolumntype{C}[1]{>{\centering\arraybackslash}m{#1}}
\newcolumntype{L}[1]{>{\raggedright\arraybackslash}m{#1}}
\usepackage{subcaption}
\usepackage{colortbl}
\usepackage[table]{xcolor} 
\newcommand{\down}{\ensuremath{\downarrow}}
\newcommand{\up}{\ensuremath{\uparrow}}
\usepackage{pifont}  
\newcommand{\xmark}{\ding{55}}
\newtheorem{definition}{Definition}
\usepackage{amsthm}
\newtheorem{theorem}{Theorem}
\newtheorem{lemma}{Lemma}
\usepackage[ruled,vlined]{algorithm2e}

\begin{document}

\title{Universal Graph Backdoor Defense: A Feature-based Homophily Perspective}

\author{Mengting Pan}
\email{mengting.pan@unsw.edu.au}
\affiliation{%
  \institution{The University of New South Wales}
  \city{Sydney}
  \country{Australia}
}

\author{Fan Li}
\email{fan.li8@unsw.edu.au}
\affiliation{%
  \institution{The University of New South Wales}
  \city{Sydney}
  \country{Australia}
}

\author{Chen Chen}
\email{chenc@uow.edu.au}
\affiliation{%
  \institution{University of Wollongong}
  \city{Wollongong}
  \country{Australia}
}

\author{Xiaoyang Wang}
\email{xiaoyang.wang1@unsw.edu.au}
\affiliation{%
  \institution{The University of New South Wales}
  \city{Sydney}
  \country{Australia}
}
\renewcommand{\shortauthors}{Pan et al.}

\begin{abstract}
Graph neural networks (GNNs) have achieved remarkable success in relational learning. However, their vulnerability to graph backdoor attacks (GBAs) poses a significant barrier to broader adoption in high-stakes applications.
Despite recent advances in graph backdoor defense (GBD), existing methods primarily focus on subgraph-based GBAs, relying on the assumption that poisoned target nodes are explicitly connected to subgraph triggers. Our empirical results reveal that such structure-centric approaches fail to defend against emerging feature-based GBAs that preserve graph topology. Therefore, in this paper, we study a novel problem of universal graph backdoor defense. First, we investigate the shared effects of both attack types from a feature-based homophily perspective, which characterizes local feature consistency between nodes and their neighborhoods.
Thorough theoretical and empirical analyses demonstrate that, regardless of trigger mechanisms, backdoors induced by GBAs exhibit lower feature-based homophily than clean nodes, indicating a discrepancy in local feature similarity.
Motivated by this insight, we propose to leverage node-level local feature consistency, modeled by a neighbor-aware reconstruction loss, to distinguish backdoors from clean nodes.
Then, a robust training strategy is developed
to eliminate trigger effects while reducing noise induced by detection uncertainty.
Extensive experiments demonstrate that our framework significantly degrades the attack success rate and maintains competitive clean accuracy under both subgraph-based and feature-based attacks.

\end{abstract}



\keywords{Safety, Robustness, Backdoor attack, Graph neural network}


\maketitle

\section{Introduction}

Graph-structured data~\cite{graph2018,graph2020,graph2020-2} is widely used for relational modeling across many real-world domains, including social networks~\cite{fan2019graph,social2022}, molecular biology~\cite{molecular2021,quesado2024hybrid}, and financial systems~\cite{financial2021,li2024adarisk}.
To effectively learn from such data, Graph Neural Networks (GNNs)~\cite{wu2020comprehensive,liu2023beyond,li2025tcgu} have emerged as a prevalent and powerful model, leveraging message passing to encode local structural patterns and node features. Despite their remarkable success, recent studies~\cite{ugba2023,dpgba2024,spear2025} reveal that GNNs are vulnerable to backdoor attacks. 
In these attacks, adversaries inject malicious triggers into selected target nodes and manipulate their labels toward a predefined target class, causing the trained model to learn the spurious trigger–class association and produce targeted misclassification during inference.
This vulnerability hinders the safe deployment of GNNs in high-stakes scenarios such as banking systems, cybersecurity, and healthcare.

\begin{figure}[t]
  \centering
  \begin{subfigure}[b]{0.48\linewidth}
    \centering
    \includegraphics[width=\linewidth]{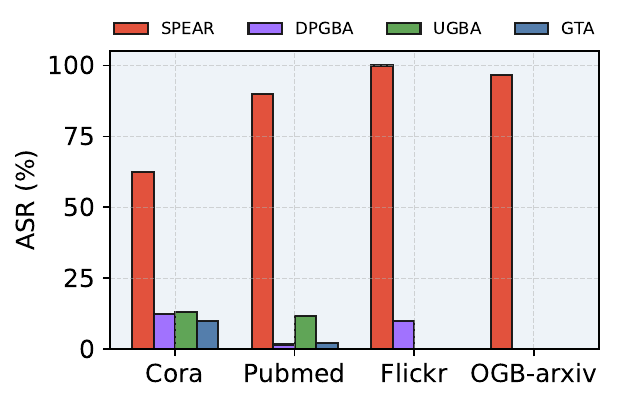}
    \caption{The performance of RIGBD.}
    \vspace{-1em}
    \label{fig:introduction}
  \end{subfigure}
  \hfill
  \begin{subfigure}[b]{0.48\linewidth}
    \centering
    \includegraphics[width=\linewidth]{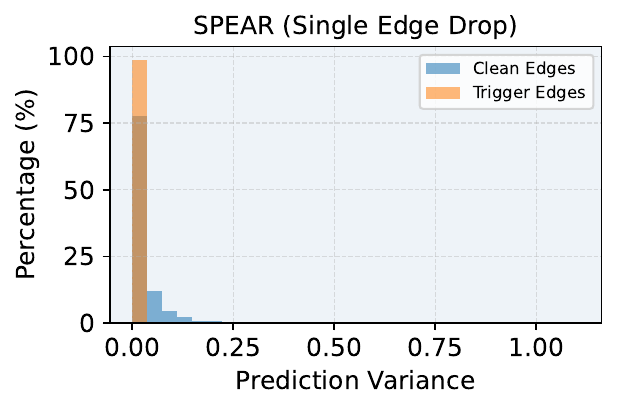}
    \caption{SPEAR (Single Edge Drop).}
     \vspace{-1em}
    \label{fig:spear-variance}
  \end{subfigure}
  \caption{ (a): Attack success rate (\%) of different graph backdoor attacks under RIGBD. (b): Prediction variance caused by dropping trigger edges and clean edges. }
\Description{}
  \vspace{-1em}
  \label{fig:rigbd_analysis}
\end{figure}

Existing graph backdoor attacks primarily focus on designing subgraph-based triggers to manipulate local neighborhood patterns around poisoned nodes.
For example, SBA~\cite{SBA2021} adopts random or sampled subgraphs as triggers, while GTA~\cite{gta2021} introduces learnable triggers that adapt to individual targets.
To further enhance stealthiness, UGBA~\cite{ugba2023} and DPGBA~\cite{dpgba2024} incorporate regularization mechanisms to generate similarity-preserving and in-distribution triggers, respectively, making trigger nodes less distinguishable from clean ones. To defend against the above subgraph-based attacks, several initial methods, such as Prune~\cite{ugba2023} and OD~\cite{dpgba2024}, have been proposed.
However, these approaches are tailored to specific backdoor patterns and thus remain effective only for particular attack methods.
To address this limitation, RIGBD~\cite{rigbd2024} proposes to identify poisoned target nodes through prediction variance under random edge perturbations and further introduces a robust training strategy to eliminate their adverse effects.
This enables a more effective and generalizable defense against subgraph-based attacks.

Despite achieving state-of-the-art performance against backdoor attacks based on subgraph triggers, RIGBD fails on SPEAR~\cite{spear2025}, a recent feature-based attack that implants triggers into node attributes without altering graph topology, as shown in Fig.~\ref{fig:rigbd_analysis}(a).
Specifically, we observe that although RIGBD consistently reduces the attack success rate (ASR) of subgraph-based attacks to below 15\% across datasets, SPEAR still achieves over 60\% ASR under RIGBD's defense, even exceeding 89\% on most datasets. 
This failure arises because RIGBD relies on the empirical assumption that removing edges associated with backdoor triggers induces a large prediction variance on poisoned target nodes.
In contrast, feature-based attacks do not introduce explicit structural triggers, which invalidates this assumption.
To substantiate this claim, we iteratively drop the neighboring edges around each node and measure the resulting variance in predictions.
Our empirical results in Fig.~\ref{fig:rigbd_analysis}(b) show that under SPEAR, edge dropping around poisoned nodes induces prediction variance indistinguishable from that of clean nodes, further explaining the ineffectiveness of RIGBD against feature-based attacks.

To bridge this research gap, we first investigate the shared effects in subgraph-based and feature-based backdoor attacks.
Intuitively, regardless of whether triggers are attached to poisoned targets in the form of subgraphs or directly injected into node attributes, the feature similarity between poisoned targets (or trigger nodes) and their surrounding neighbors may be perturbed. Motivated by this, we introduce a novel concept termed \textit{feature-based homophily}, which characterizes the feature consistency between each node and its local neighborhood. Through rigorous theoretical analysis in Sec.~\ref{sec:rethinking}, we show that, compared with clean nodes, poisoned targets exhibit lower feature-based homophily with respect to their surrounding neighbors, indicating stronger feature inconsistency within local neighborhoods. Empirical results in Sec.~\ref{sec:empirical-analysis} across multiple datasets further strengthen this theoretical finding.
This homophily discrepancy unifies the intrinsic effects of both attack types and naturally provides an effective discriminative signal for identifying backdoors, including both triggers and poisoned targets.

Based on these insights, we propose CoGBD (\textbf{Co}nsistency-guided \textbf{G}raph \textbf{B}ackdoor \textbf{D}efense), a novel graph backdoor defense framework that 
leverages the deviation of feature consistency within local neighborhoods.
To the best of our knowledge, CoGBD is the first universal graph backdoor defense capable of effectively defending against both subgraph-based and feature-based attacks. Overall, CoGBD adopts a two-stage pipeline.
Motivated by the above analysis, in the first stage, we design tri-directional reconstruction objectives to model feature consistency within the local neighborhood of each individual node, where nodes exhibiting large reconstruction errors are identified as potential backdoors. 
In the second stage, we develop a noise-aware robust training strategy that minimizes over-confident predictions for candidate poisoned nodes using adaptive weights inversely related to their reconstruction errors.
This design effectively counteracts the impacts of triggers while mitigating noise introduced by potentially imprecise backdoor detection.
Our main contributions are summarized as follows:
\begin{itemize}

    \item We empirically show the ineffectiveness of state-of-the-art graph backdoor defenses against feature-based attacks and study a novel problem of universal graph backdoor defense.
    
    \item We unify the effects of both subgraph-based and feature-based graph backdoor attacks from a feature-based homophily perspective. Our theoretical and empirical analyses uncover a local feature consistency gap that provides practical guidance for identifying backdoors.
    \item We propose CoGBD, the first universal graph backdoor defense framework, consisting
    (i) a reconstruction-driven backdoor detection stage that leverages the feature consistency gap 
    and (ii) a noise-aware robust training stage that adaptively counteracts trigger influence.
    \item Extensive experiments demonstrate the effectiveness of CoGBD in defending against both subgraph-based and feature-based attacks while maintaining clean accuracy.
\end{itemize}

\section{Related Works}
\label{sec:related}

\myparagraph{Graph Backdoor Attacks.}
Graph backdoor attacks (GBAs)~\cite{poster2022,mlgba2024,eumc2024,cgba2025} expose a severe vulnerability of GNNs, where an adversary implants triggers into a small set of training nodes so that the trained model associates the trigger with a target label.
At test time, any node carrying the trigger is misclassified to the target class, while predictions on clean nodes remain largely unaffected.
Most existing GBAs instantiate triggers as subgraphs by manipulating local connectivity patterns.
SBA~\cite{SBA2021} injects a universal subgraph into selected nodes, and GTA~\cite{gta2021} further learns adaptive triggers via a generator.
To improve stealthiness, UGBA~\cite{ugba2023} encourages similarity between trigger nodes and the attached targets.
Subsequent studies show that many of these triggers are out-of-distribution and remain distinguishable from clean neighborhoods~\cite{dpgba2024}.
DPGBA~\cite{dpgba2024} improves stealthiness by generating in-distribution triggers through adversarial optimization.
Beyond structural manipulation, feature-based graph backdoor attacks inject triggers
directly into node attributes while keeping the graph topology intact.
SPEAR~\cite{spear2025} exemplifies this paradigm by embedding attack signals purely in
node features, significantly increasing stealthiness and challenging defenses that
rely on structural irregularities.


\myparagraph{Graph Backdoor Defense.}
Compared with the rapidly growing studies on graph backdoor attacks, graph backdoor defense (GBD) methods remain limited. 
Most existing GBD methods are designed for subgraph-based attack and rely on the assumption that backdoor triggers introduce explicit structural irregularities.
For example, Prune~\cite{ugba2023} weakens potential trigger effects by removing edges between node pairs with low feature similarity, while Outlier Detection
(OD)~\cite{dpgba2024} identifies trigger nodes that deviate from the distribution of clean data and isolates their connections from the remaining graph.
However, these methods are typically tailored to specific trigger structures and are effective only for particular subgraph-based attack patterns.
To overcome this limitation, RIGBD~\cite{rigbd2024} focuses on poisoned target nodes by exploiting the model's sensitivity to random edge dropping and incorporating robust training to eliminate trigger effects, enabling more generalizable defense in various subgraph-based backdoor attacks.
While RIGBD achieves strong performance against a wide range of subgraph-based backdoor attacks, it becomes less reliable in recent feature-based GBAs, where backdoor triggers are embedded in node attributes and the graph topology is largely preserved.
This gap motivates us to develop a unified defense framework that remains effective against both subgraph-based and feature-based backdoor attacks.

\section{Preliminaries}

\myparagraph{Notations.}
Let
$\mathcal{G}=(\mathcal{V},\mathcal{E},\mathbf{X})$ denote an attributed graph,
where $\mathcal{V}=\{v_1,\ldots,v_N\}$ is the set of $N$ nodes,
$\mathcal{E}\subseteq\mathcal{V}\times\mathcal{V}$ is the set of $E$ edges,
and $\mathbf{X}\in\mathbb{R}^{N\times F}$ is the node attribute matrix,
with each node $v_i$ associated with a feature vector $\mathbf{x}_i\in\mathbb{R}^F$. The one-hop neighborhood of node $v$ is denoted by $\mathcal{N}(v)=\{u:(u,v)\in\mathcal{E}\}$.
$\mathbf{A}\in\mathbb{R}^{N\times N}$ is the adjacency matrix of the graph $\mathcal{G}$, where $A_{ij}=1$ if $(v_i,v_j)\in\mathcal{E}$, and $A_{ij}=0$ otherwise.
We denote the normalized adjacency matrix with self-loops as
$\bar{\mathbf{A}}=\mathbf{D}^{-1/2}(\mathbf{A}+\mathbf{I})\mathbf{D}^{-1/2}$,
where $\mathbf{I}$ is the identity matrix and $\mathbf{D}$ is the diagonal degree matrix
with $\mathbf{D}_{ii}=\sum_j(\mathbf{A}+\mathbf{I})_{ij}$.
In this paper, we focus on the semi-supervised node classification task in the inductive setting.
During training, we are given a graph
$\mathcal{G}_{\mathrm{T}}=(\mathcal{V}_{\mathrm{T}},\mathcal{E}_{\mathrm{T}},\mathbf{X}_{\mathrm{T}})$.
We use $\mathcal{V}_{\mathrm{C}}\subseteq\mathcal{V}_{\mathrm{T}}$ and
$\mathcal{V}_{\mathrm{B}}\subseteq\mathcal{V}_{\mathrm{T}}$
to denote the clean and backdoored node sets, respectively.
Nodes in $\mathcal{V}_{\mathrm{C}}$ are labeled with clean labels,
whereas nodes in $\mathcal{V}_{\mathrm{B}}$ are labeled with the target label $y_{\mathrm{t}}$.
The remaining nodes
$\mathcal{V}_{\mathrm{T}}\setminus(\mathcal{V}_{\mathrm{C}}\cup\mathcal{V}_{\mathrm{B}})$
are unlabeled.
At inference time, we are given an unseen graph
$\mathcal{G}_{\mathrm{U}}=(\mathcal{V}_{\mathrm{U}},\mathcal{E}_{\mathrm{U}},\mathbf{X}_{\mathrm{U}})$.
The node set $\mathcal{V}_{\mathrm{U}}$ consists of clean and backdoored test nodes,
denoted by $\mathcal{V}_{\mathrm{UC}}$ and $\mathcal{V}_{\mathrm{UB}}$, respectively.
Notably, $\mathcal{G}_{\mathrm{U}}$ is disjoint from the training graph, i.e.,
$\mathcal{V}_{\mathrm{U}}\cap\mathcal{V}_{\mathrm{T}}=\emptyset$.

\myparagraph{Graph Neural Networks (GNNs).}
Most existing GNNs adopt a message-passing paradigm, in which each node
$v \in \mathcal{V}$ is associated with a representation vector $h_v$ that is iteratively
updated over $K$ layers.
At each layer, node representations are first aggregated from local neighborhoods $\mathcal{N}(v)$ and then transformed by a learnable encoder.
Formally,  for common GNN models such as GCN~\cite{gcn2017} and GAT~\cite{gat2018}, 
the representation of node $v$ at the $k$-th layer is updated as:
$
    h_v^{(k)}=
    \mathrm{ENC}\!\left(
    \mathrm{AGGR}\big(\{h_u^{(k-1)}:u\in\mathcal{N}(v)\cup\{v\}\}\big)
    \right),
$
where $\mathrm{AGGR}(\cdot)$ denotes a permutation-invariant neighborhood aggregation
operator and $\mathrm{ENC}(\cdot)$ is a learnable transformation.

\myparagraph{Threat Model.}
We consider the common gray-box backdoor attack~\cite{gta2021,ugba2023,dpgba2024,rigbd2024,spear2025},
where the attacker has access to the training graph $\mathcal{G}_{\mathrm{T}}$, but has no knowledge of the victim GNN architecture or parameters.
The attacker aims to attach backdoor triggers, i.e., nodes, subgraphs, or feature perturbations, into a small set of nodes $\mathcal{V}_{\mathrm{B}}$ and assign them the target label $y_t$,
resulting in a backdoored graph $\mathcal{G}_{\mathrm{B}}=(\mathcal{V}'_{\mathrm{T}},\mathcal{E}'_{\mathrm{T}},\mathbf{X}'_{\mathrm{T}})$.
A GNN model trained on $\mathcal{G}_{\mathrm{B}}$ is expected to behave normally on clean nodes while being misguided by the backdoor trigger to classify nodes attached with triggers as $y_{t}$.
Formally, given a surrogate GNN model $f_s$, which the attacker uses as a proxy to design and optimize backdoor triggers, the attack objective can be expressed as:
$
\min_{f_s}\;
\sum_{v \in \mathcal{V}_{\mathrm B}}
\ell\!\left(f_s(v), y_t\right)
\;+\;
\sum_{u \in \mathcal{V}_{\mathrm C}}
\ell\!\left(f_s(u), y_u\right),
$
where $\ell(\cdot,\cdot)$ denotes the classification loss.

\myparagraph{Defender's Knowledge and Objective.}
Unlike prior graph backdoor defenses~\cite{ugba2023,dpgba2024,rigbd2024} that are tailored to subgraph-based attacks and do not generalize to feature-based triggers, we study a universal, attack-agnostic defense setting.
The defender only has access to the backdoored training graph $\mathcal{G}_{\mathrm{B}}$,
without prior knowledge of the attack type, the poisoned node set $\mathcal{V}_{\mathrm{B}}$,
or the target label $y_t$.
Given a backdoored graph $\mathcal{G}_{\mathrm{B}}$, our objective is to train a GNN model $f_g$ that can defend against backdoor triggers during inference while preserving accuracy on clean data, which is formally defined as:
$
\min_{f_g}\;
\sum_{v \in \mathcal{V}_{\mathrm{UC}}}
\ell\!\left(f_g(v), y_v\right)
\;-\;
\sum_{u \in \mathcal{V}_{\mathrm{UB}}}
\ell\!\left(f_g(u), y_t\right),
$
where $\ell(\cdot,\cdot)$ denotes the classification loss.

\section{A Unified View of Graph Backdoor Attacks}
\label{sec:analysis}

In this section, we provide a unified view of graph backdoor attacks.
By rethinking both subgraph-based and feature-based GBAs
from a novel feature-based homophily perspective,
we theoretically and empirically show that
seemingly different attack strategies
induce common intrinsic effects at the representation level.
This unified perspective decouples the analysis from specific trigger designs,
exposes the fundamental mechanisms shared across GBAs,
and facilitates the design of the universal backdoor defense strategy.

\subsection{Theoretical Analysis}
\label{sec:rethinking}
\myparagraph{Feature-based Homophily.}
We posit that both subgraph-based and feature-based GBAs,
despite employing different trigger forms,
can implicitly alter the consistency between a node and its local neighborhood.
Specifically, subgraph triggers and malicious feature perturbations
may disrupt the alignment between node attributes and surrounding structures,
even when the overall graph topology is preserved.
To characterize this consistency,
we introduce a novel \textit{feature-based homophily} metric:

\begin{definition}[Feature-based Homophily]
\label{def:feat-homo}
The feature-based homophily of a node $v$ is defined as the similarity
between its features
and the aggregated features of its neighboring nodes:
\begin{equation}
\label{eq:feat-homo}
\mathcal{H}_{\mathrm{feat}}(v)
=
\mathrm{sim}\!\left(
x_v,\;
\mathrm{AGGR}\{x_u : u \in \mathcal{N}(v)\}
\right),
\end{equation}
where $\mathrm{sim}(\cdot,\cdot)$ denotes the similarity metric (e.g., cosine similarity).
\end{definition}

\myparagraph{How Effective GBAs Affect Message Passing in GNNs.} We begin by analyzing how graph backdoor attacks (GBAs) affect
message passing and the resulting node representations in GNNs.
To this end, consider a general backdoored graph
$
\mathcal{G}_{\mathrm{B}}
=
(\mathbf{A}+\Delta\mathbf{A},\; \mathbf{X}+\Delta\mathbf{X},\; \tilde{\mathbf{Y}}),
$
where $\Delta\mathbf{A}$ and $\Delta\mathbf{X}$ denote the induced perturbations
on structure and features, respectively,
and $\tilde{\mathbf{Y}}$ denotes the poisoned label set,
with target nodes assigned the target class $y_t$.
Let $\mathbf{H}^{(l)}=\bar{\mathbf{A}}^{\,l}\mathbf{X}$ denote the node representations after $L$ layers of message passing,
and let $h^{(l)}_v \triangleq [\mathbf{H}^{(l)}]_v$ be the representation of node $v$. The following lemma characterizes the shift
in hidden representations after message passing under attacks.

\begin{lemma}
\label{lem:propagation}
Let $h_v^{(l)}$ and ${h'}_v^{(l)}$ denote the $l$-th layer representations
of a target node $v$ on the clean and backdoored graphs
$\mathcal{G}_{\mathrm{T}}$ and $\mathcal{G}_{\mathrm{B}}$, respectively.
We further denote
$
\pi^{(l)}_{vu} \triangleq (\bar{\mathbf A}^{\,l})_{vu},
\;
{\pi'}^{(l)}_{vu} \triangleq
\big((\bar{\mathbf A}+\Delta\bar{\mathbf A})^{\,l}\big)_{vu}.
$
Then, the representation shift
$\Delta h_v^{(l)} \triangleq {h'}_v^{(l)} - h_v^{(l)}$
can be written as:
\begin{equation}
\label{eq:dhv}
\Delta h_v^{(l)}
=
\sum_{u\in\mathcal{V}}
\big({\pi'}^{(l)}_{vu}-\pi^{(l)}_{vu}\big)\,x_u
+
\sum_{u\in\mathcal{V}}
{\pi'}^{(l)}_{vu}\,\Delta x_u,
\end{equation}
where $x_u$ is the original feature vector of node $u$ and $\Delta x_u$ denotes the feature perturbation applied on node $u$.
\end{lemma}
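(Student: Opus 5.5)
The identity follows from linearity of the propagation $\mathbf{H}^{(l)}=\bar{\mathbf{A}}^{\,l}\mathbf{X}$ plus an add-and-subtract step.

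\textbf{Step 1 (unroll the two propagations).} On the clean graph, $h_v^{(l)}=[\bar{\mathbf A}^{\,l}\mathbf X]_v=\sum_{u}\pi^{(l)}_{vu}x_u$. On the backdoored graph the normalized propagation matrix becomes $\bar{\mathbf A}+\Delta\bar{\mathbf A}$ and the features become $\mathbf X+\Delta\mathbf X$. Here $\Delta\bar{\mathbf A}$ is simply defined as the difference between the renormalized adjacency of $\mathbf A+\Delta\mathbf A$ and $\bar{\mathbf A}$. Hence
\[
{h'}_v^{(l)}=\big[(\bar{\mathbf A}+\Delta\bar{\mathbf A})^{l}(\mathbf X+\Delta\mathbf X)\big]_v=\sum_{u}{\pi'}^{(l)}_{vu}(x_u+\Delta x_u).
\]
Distributing gives $\sum_u {\pi'}^{(l)}_{vu}x_u+\sum_u{\pi'}^{(l)}_{vu}\Delta x_u$.

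\textbf{Step 2 (subtract).} Subtract $\sum_u\pi^{(l)}_{vu}x_u$ and group the two sums multiplying the same $x_u$. This yields exactly Eq.~\eqref{eq:dhv}: a structural term $\sum_u({\pi'}^{(l)}_{vu}-\pi^{(l)}_{vu})x_u$ and a feature term $\sum_u{\pi'}^{(l)}_{vu}\Delta x_u$.

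\textbf{Bookkeeping.} The only subtlety is the index set. Subgraph-based attacks inject new trigger nodes, so $\mathcal V$ must be read as the union of original and injected nodes. For injected nodes we set $x_u=0$ and $\pi^{(l)}_{vu}=0$ on the clean graph, and let $\Delta x_u$ be their full features. Equivalently, $\mathbf A$ and $\mathbf X$ are zero-padded to the enlarged node set, with isolated padded nodes whose self-loop normalization never reaches $v$ in the clean graph. Once these conventions are fixed, the result is a one-line matrix-algebra identity. I expect no real obstacle beyond stating the padding convention clearly.
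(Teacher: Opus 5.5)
Your proof is correct and is essentially the paper's own argument. You expand $h_v^{(l)}=\sum_u\pi^{(l)}_{vu}x_u$ and ${h'}_v^{(l)}=\sum_u{\pi'}^{(l)}_{vu}(x_u+\Delta x_u)$, then subtract and regroup.

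Your zero-padding convention for injected trigger nodes is a clarification the paper leaves implicit. Another convention also works: give the padded node its trigger features as $x_u$ and set $\Delta x_u=\mathbf 0$ (the isolated node still never reaches $v$ in the clean graph). That convention is the one consistent with the paper's later treatment of subgraph attacks as having $\Delta\mathbf X=\mathbf 0$.
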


\begin{proof}
The proof is provided in Appendix~\ref{App:propagation}.
\end{proof}


We next provide the necessary condition under which the above perturbations cause targeted nodes to be misclassified.

\begin{lemma}
\label{lem:target}
Consider a targeted graph backdoor attack on node $v$ with target class $y_t$.
The GNN classifier produces logits
$
z_{v,c} = \mathbf w_c^\top h_v^{(l)},
$
where $\mathbf w_c$ is the class-specific weight vector.
The predicted probability for class $c$ is
$
p_{v,c} = \frac{\exp(z_{v,c})}{\sum_{c'} \exp(z_{v,c'})}.
$
Let $\ell_{\mathrm{CE}}(v) = -\log p_{v,y_t}$ be the target-class cross-entropy loss.
For a small representation shift $\Delta h_v^{(l)}$ induced by the trigger,
a necessary condition for decreasing $\ell_{\mathrm{CE}}(v)$ is:
\begin{equation}
\label{eq:target}
\left\langle
\mathbf w_{y_t} - \bar{\mathbf w}_v,\,
\Delta h_v^{(l)}
\right\rangle > 0,
\end{equation}
where
$
\bar{\mathbf w}_v \triangleq \sum_c p_{v,c}\,\mathbf w_c
$
denotes the expected classifier weight under the current predictive distribution of node $v$.
\end{lemma}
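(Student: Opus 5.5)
The plan is a first-order Taylor expansion of the target-class cross-entropy loss in the representation $h_v^{(l)}$, followed by a sign argument. Regard $\ell_{\mathrm{CE}}(v)$ as a function of $h \triangleq h_v^{(l)}$ through the logits $z_{v,c}=\mathbf w_c^\top h$. Write
\[
\ell_{\mathrm{CE}}(v) = -z_{v,y_t} + \log\sum_{c'}\exp(z_{v,c'}).
\]

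First we compute the gradient. Since $\partial \log\sum_{c'}\exp(z_{v,c'})/\partial z_{v,c} = p_{v,c}$ and $\partial z_{v,c}/\partial h = \mathbf w_c$, the chain rule gives
\[
\nabla_{h}\ell_{\mathrm{CE}}(v) = -\mathbf w_{y_t} + \sum_c p_{v,c}\mathbf w_c = -(\mathbf w_{y_t}-\bar{\mathbf w}_v),
\]
with $\bar{\mathbf w}_v$ as defined in the statement.

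Second, we expand the loss to first order. Because the softmax cross-entropy is smooth (its Hessian in $z$ is $\mathrm{diag}(p)-pp^\top$, which is bounded), Taylor's theorem gives, for the shift $\Delta h_v^{(l)}$ of Lemma~\ref{lem:propagation},
\[
\ell_{\mathrm{CE}}(v;h+\Delta h_v^{(l)}) - \ell_{\mathrm{CE}}(v;h) = -\big\langle \mathbf w_{y_t}-\bar{\mathbf w}_v,\Delta h_v^{(l)}\big\rangle + O\big(\|\Delta h_v^{(l)}\|^2\big).
\]
The remainder is bounded by $\tfrac12\max_c\|\mathbf w_c\|^2\|\Delta h_v^{(l)}\|^2$, since $\mathrm{diag}(p)-pp^\top$ has operator norm at most one.

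Third, we argue necessity by contraposition. Suppose $\langle \mathbf w_{y_t}-\bar{\mathbf w}_v,\Delta h_v^{(l)}\rangle<0$. Then the linear term is strictly positive and of order $\|\Delta h\|$, which dominates the quadratic remainder for sufficiently small shifts, so the loss increases.

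The main obstacle is the boundary case where the inner product equals zero. There the first-order term vanishes and the sign of the change is set by the second-order term, which is nonnegative in the relevant directions because the loss is convex in $z$. So a decrease cannot occur at first order.

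I would handle this by stating the claim in the small-shift (linearized) regime the lemma intends. That is, I would interpret ``decreasing $\ell_{\mathrm{CE}}$'' as having a negative first-order change, or equivalently a negative directional derivative along $\Delta h_v^{(l)}$. Under that reading, a decrease holds if and only if the strict inequality in Eq.~\eqref{eq:target} holds. This gives necessity, and in fact sufficiency to first order as well.
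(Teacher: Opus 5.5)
Your proposal is correct and follows the paper's proof. Both compute $\nabla_{h}\ell_{\mathrm{CE}}(v)=\bar{\mathbf w}_v-\mathbf w_{y_t}$ by the chain rule and read the sign condition off the first-order Taylor expansion; your remainder bound and contrapositive just make explicit the paper's ``ignoring higher-order terms'' step.

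The boundary case you flag is not actually an obstacle. The Hessian quadratic form of $\ell_{\mathrm{CE}}$ in $h$, evaluated at $\Delta h_v^{(l)}$, equals the variance of $\langle \mathbf w_c,\Delta h_v^{(l)}\rangle$ under the softmax distribution over $c$. This variance structure, rather than the operator-norm bound alone, is what gives your constant $\frac{1}{2}\max_c\|\mathbf w_c\|^2$. It also shows $\ell_{\mathrm{CE}}$ is convex in $h_v^{(l)}$, so $\ell_{\mathrm{CE}}(h_v^{(l)}+\Delta h_v^{(l)})\ge \ell_{\mathrm{CE}}(h_v^{(l)})-\langle \mathbf w_{y_t}-\bar{\mathbf w}_v,\Delta h_v^{(l)}\rangle$ for every shift. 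Hence the strict inequality is necessary for any decrease at all, with no smallness assumption or linearized reinterpretation needed.
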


\begin{proof}
The proof is provided in Appendix~\ref{App:shift}.
\end{proof}

\myparagraph{Remark.}
Lemma~\ref{lem:propagation} characterizes how different types of backdoor triggers perturb node representations through message passing.
Lemma~\ref{lem:target} further reveals that such perturbations are not arbitrary: the induced representation shift must be explicitly aligned with the target class direction of the classifier.

\myparagraph{Distributional Shift in Feature-Based Homophily.}
We now investigate how effective GBAs induce distributional discrepancies in feature-based homophily across node types.

\begin{theorem}
\label{the:homo-drop}
Given a backdoored graph $\mathcal G_{\mathrm B}$ with clean node set
$\mathcal V_{\mathrm C}$ and backdoored node set $\mathcal V_{\mathrm B}$.
For any effective graph backdoor attack, the expected feature-based homophily of backdoored nodes is lower than that of clean nodes:
\begin{equation}
\mathbb E_{v\sim \mathcal V_{\mathrm B}}
\!\left[\mathcal H_{\mathrm{feat}}(v)\right]
\;<\;
\mathbb E_{v\sim \mathcal V_{\mathrm C}}
\!\left[\mathcal H_{\mathrm{feat}}(v)\right].
\end{equation}
\end{theorem}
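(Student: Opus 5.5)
We would prove Theorem~\ref{the:homo-drop} by combining the two lemmas under a stylized model of the clean graph. We take $\mathrm{sim}$ to be an inner-product similarity, or cosine with approximately normalized features, and $\mathrm{AGGR}$ to be the normalized aggregation $\bar{\mathbf A}$. We model clean nodes as homophilous: for $v\in\mathcal V_{\mathrm C}$ with true class $y_v$, both $x_v$ and the aggregated neighbor feature concentrate around a class prototype $\mu_{y_v}$. Hence $\mathbb E_{v\sim\mathcal V_{\mathrm C}}[\mathcal H_{\mathrm{feat}}(v)]\approx \mathrm{sim}(\mu_{y_v},\mu_{y_v})$, up to noise terms. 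The backdoored nodes $v\in\mathcal V_{\mathrm B}$ are drawn from the same clean distribution before poisoning; their true class satisfies $y_v\neq y_t$ for an effective attack. The claim is therefore that the trigger necessarily decreases the similarity between $v$ and its neighborhood aggregate.

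\textbf{Step 1: decompose the trigger effect.} By Lemma~\ref{lem:propagation} with $l=1$, the backdoored node's own aggregated representation shifts by
\[
\Delta h_v = \sum_u (\pi'_{vu}-\pi_{vu})x_u + \sum_u \pi'_{vu}\Delta x_u .
\]
Splitting the self term $u=v$ from the neighbor terms gives the change in the two arguments of $\mathcal H_{\mathrm{feat}}(v)$. For feature triggers the node feature becomes $x_v+\Delta x_v$ while the neighbor aggregate stays near $\mu_{y_v}$. For subgraph triggers the neighbor aggregate acquires the trigger nodes' contribution $\sum_{u\in\text{trig}}\pi'_{vu}x'_u$ while $x_v$ is unchanged.

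\textbf{Step 2: force the shift to have a target-class component.} By Lemma~\ref{lem:target}, effectiveness requires $\langle \mathbf w_{y_t}-\bar{\mathbf w}_v,\Delta h_v\rangle>0$. Since $v$ is originally confidently classified as $y_v$, we have $\bar{\mathbf w}_v\approx\mathbf w_{y_v}$. So $\Delta h_v$ must carry a nonzero component $\delta$ along $\mathbf w_{y_t}-\mathbf w_{y_v}$, a direction we identify, for a linear classifier fitted to well-separated prototypes, with $\mu_{y_t}-\mu_{y_v}$. Crucially, this component must come from one side only:
\begin{itemize}[leftmargin=*]
\item for feature-based attacks it comes from the node's own feature, since the neighbors' features are unperturbed;
\item for subgraph-based attacks it comes from the neighbor aggregate, since $x_v$ is untouched.
\end{itemize}
The same reasoning applies to trigger nodes themselves: they are optimized toward the target direction, yet they attach to a neighbor $v$ of class $y_v$.

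\textbf{Step 3: show the one-sided shift lowers similarity.} We expand $\mathrm{sim}(\mu_{y_v}+\delta,\ \mu_{y_v})$ in the cosine case and show it is strictly less than $1=\mathrm{sim}(\mu_{y_v},\mu_{y_v})$ whenever $\delta$ has a component orthogonal to $\mu_{y_v}$. That orthogonal component is guaranteed because $\mu_{y_t}-\mu_{y_v}$ is not parallel to $\mu_{y_v}$ when the class prototypes are distinct and normalized. Averaging over $v\in\mathcal V_{\mathrm B}$, and noting that the noise terms match in expectation because poisoned nodes are sampled like clean ones, yields the strict inequality.

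The main obstacle is Step 2's passage from the logit-space condition to a feature-space statement. Lemma~\ref{lem:target} only constrains $\Delta h_v$ along $\mathbf w_{y_t}-\bar{\mathbf w}_v$, so we need an assumption linking classifier directions to class prototypes. We would state that assumption explicitly, for example a linear-separability or class-conditional Gaussian model in which the optimal $\mathbf w_c$ is proportional to $\mu_c$. We would also treat the attacker's in-distribution or similarity regularization (UGBA, DPGBA) as bounding, but not eliminating, the orthogonal component, so that the inequality holds in expectation rather than node by node.
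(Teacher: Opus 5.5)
\textbf{There is a genuine gap in your Step~3,} where the noiseless prototype computation becomes the expectation inequality. The two halves of that step need different similarity functions, and neither function supports both.

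Your strict drop uses cosine: $\cos(\mu_{y_v}+\delta,\mu_{y_v})<1$ once $\delta$ leaves the ray of $\mu_{y_v}$. But cosine is nonlinear and scale-invariant, so your claim that the noise terms ``match in expectation'' is false, because a large trigger drowns out the node's own noise. Here is a counterexample inside your own model.
\begin{itemize}
\item Take unit prototypes with $c=\langle\mu_{y_t},\mu_{y_v}\rangle\in(0,1)$ and $\mathbf w_k=\mu_k$.
\item Let $x_v=\mu_{y_v}+\epsilon_v$, with $\epsilon_v$ isotropic Gaussian of covariance $\sigma^2\mathbf I_F$.
\item Let the mean neighbor aggregate be $\mu_{y_v}+\bar\epsilon$, with covariance $\sigma^2\mathbf I_F/d$.
\item Use the feature trigger $\Delta x_v=\eta\,\mu_{y_t}$.
\end{itemize}
For large $\eta$ the prediction flips to $y_t$. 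Yet in high dimension the poisoned cosine tends to $c/\sqrt{1+\sigma^2F/d}$, while clean nodes sit near $1/\sqrt{(1+\sigma^2F)(1+\sigma^2F/d)}$. The inequality therefore reverses as soon as $\sigma^2F>c^{-2}-1$. That is exactly the regime where clean homophily is far below $1$, as in Table~\ref{tab:label-homo}.

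If you use the inner product instead, linearity does make the noise average out. But then the component of $\delta$ orthogonal to $\mu_{y_v}$ is irrelevant; only $\langle\delta,\mu_{y_v}\rangle$ matters. The same trigger raises homophily by $\eta c>0$ while remaining effective.

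There is also a smaller slip in the same step. The first-order condition of Lemma~\ref{lem:target} does not force any orthogonal component. Pure shrinkage $\delta=-s\mu_{y_v}$ with $0<s<1$ gives $\langle\mu_{y_t}-\mu_{y_v},\delta\rangle=s(1-c)>0$ and leaves the cosine equal to $1$.

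\textbf{How the paper handles this.} The paper avoids these problems by assumption rather than derivation:
\begin{itemize}
\item It fixes the inner product and mean aggregation.
\item It postulates the misalignment margin $\mathbb E_{v\sim\mathcal V_{\mathrm B}}[\langle\mathbf g_v,m_v\rangle]\le-\gamma$.
\item It splits $\Delta x_v=\lambda\mathbf g_v+r_v$ and assumes $\mathbb E[\langle r_v,m_v\rangle]\le 0$. This is exactly what excludes the trigger above, whose residual is $r_v=\tfrac{\eta}{2}(\mu_{y_t}+\mu_{y_v})$ with $\langle r_v,m_v\rangle=\tfrac{\eta}{2}(1+c)>0$.
\item It assumes the subgraph-case drop $\gamma'$ outright and uses clean-majority regularity for the final comparison.
\end{itemize}

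Your prototype model is more informative. It derives the misalignment, since $\langle\mu_{y_t}-\mu_{y_v},\mu_{y_v}\rangle=c-1<0$, and it treats both attack types symmetrically. To close it, you need either the paper's residual assumption (inner-product version) or two changes for the cosine version:
\begin{itemize}
\item Define effectiveness as misclassification, $\langle\mu_{y_t}-\mu_{y_v},h'_v\rangle>0$. By the spherical triangle inequality this gives $\angle(h'_v,\mu_{y_v})>\theta/2$, where $\cos\theta=c$. Since $h'_v$ is a positive combination of the perturbed side ($x'_v$, or the perturbed neighbor aggregate) and a positive multiple of $\mu_{y_v}$, the perturbed side also makes an angle greater than $\theta/2$ with $\mu_{y_v}$. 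So noiseless homophily on $\mathcal V_{\mathrm B}$ is uniformly at most $\cos(\theta/2)$.
\item Add an explicit small-noise assumption keeping the noise-induced deviation of both averages below $1-\cos(\theta/2)$. Note that this confines the result to a near-perfect-homophily regime.
\end{itemize}
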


\begin{proof}
The complete proof is provided in Appendix~\ref{App:homo-drop}.
\end{proof}

\noindent\textbf{Remark.}
Theorem~\ref{the:homo-drop} reveals a distributional discrepancy in the expected feature-based homophily between clean and poisoned target nodes under effective graph backdoor attacks. This phenomenon arises consistently across both subgraph-based and feature-based attack paradigms,  highlighting a shared effect of GBAs.
In particular, subgraph-based attacks manipulate propagation paths to amplify target-aligned signals, indirectly inducing feature–neighborhood mismatch, whereas feature-based attacks directly inject target-aligned attributes that disrupt local consistency. 
Beyond target nodes, trigger nodes constitute another essential source of backdoor behavior.
Their local context 
is typically designed to maximize attack effectiveness rather than to preserve
node--neighborhood consistency. As a result, trigger nodes likewise exhibit pronounced neighborhood inconsistency as well.
Further discussion on trigger nodes is provided in Appendix~\ref{app:trigger-homo}.

\begin{table}[t]
\centering
\caption{Feature-based homophily across GBAs.}
\label{tab:label-homo}
\footnotesize
\begin{tabular}{%
C{0.9cm}L{1.6cm}
C{1.0cm}C{1.0cm}C{1.0cm}C{1.0cm}
}
\toprule
\textbf{Attack} & \textbf{Node Types} & \textbf{Cora} & \textbf{Pubmed} & \textbf{Flickr} & \textbf{OGB-arxiv} \\
\midrule

\multirow{3}{*}{\textbf{GTA}}
 & Clean Nodes   & 0.1767 & 0.2647 & 0.3201 & 0.8346 \\
 & Target Nodes  & 0.1332 & 0.1439 & 0.2273 & 0.5847 \\
 & Trigger Nodes & -0.1166 & -0.0461  &  -0.1060 & -0.1012 \\
\midrule

\multirow{3}{*}{\textbf{UGBA}}
 & Clean Nodes    & 0.1767    & 0.2647 & 0.3201 & 0.8346 \\
 & Target Nodes   & 0.1253    & 0.1936 & 0.2646 & 0.7095 \\
 & Trigger Nodes  & 0.1578    & 0.2019 & 0.2774 & 0.8001 \\
\midrule

\multirow{3}{*}{\textbf{DPGBA}}
 & Clean Nodes    & 0.1767 & 0.2647 & 0.3201  & 0.8346 \\
 & Target Nodes   & 0.1422 & 0.1777 & 0.2716  & 0.5251 \\
 & Trigger Nodes  & 0.0691 & 0.0039 & -0.1033 & 0.0914 \\
\midrule

\multirow{2}{*}{\textbf{SPEAR}}
 & Clean Nodes   & 0.1763 & 0.2642 & 0.3201 & 0.8143 \\
 & Target Nodes  & 0.1208 & 0.0887 & 0.3098 & 0.0852 \\
\bottomrule
\end{tabular}

\end{table}
\subsection{Empirical Analysis}
\label{sec:empirical-analysis}
To further validate the above theoretical finding, we empirically evaluate feature-based homophily under four representative GBAs spanning both subgraph-based and feature-based attacks. 
In particular, we report the average feature-based homophily of clean nodes, poisoned target nodes,  and trigger nodes on the backdoored graph, computed via cosine similarity.
From Table~\ref{tab:label-homo}, we observe that both poisoned target nodes and trigger nodes 
consistently exhibit substantially lower feature-based homophily than clean nodes, 
reflecting a clear homophily discrepancy between backdoors and clean nodes.
For poisoned target nodes, this gap is particularly pronounced under feature-based attacks such as SPEAR, where attribute-level triggers directly disrupt local feature--neighborhood alignment (e.g., on OGB-arxiv, target nodes show an approximately 89.5\% lower homophily value compared to clean nodes).
For trigger nodes, large homophily gaps are observed under subgraph-based attacks such as GTA and DPGBA (e.g., on OGB-arxiv, trigger nodes exhibit about 80\% and 90\% lower homophily relative to clean nodes, respectively).
These results empirically support Theorem~\ref{the:homo-drop} and demonstrate that feature-based homophily serves as a stable and discriminative signal for identifying diverse backdoor-induced anomalies across different attack paradigms.

\section{Methodology}
\begin{figure}[!t]
\centering
\includegraphics[width=1\columnwidth]{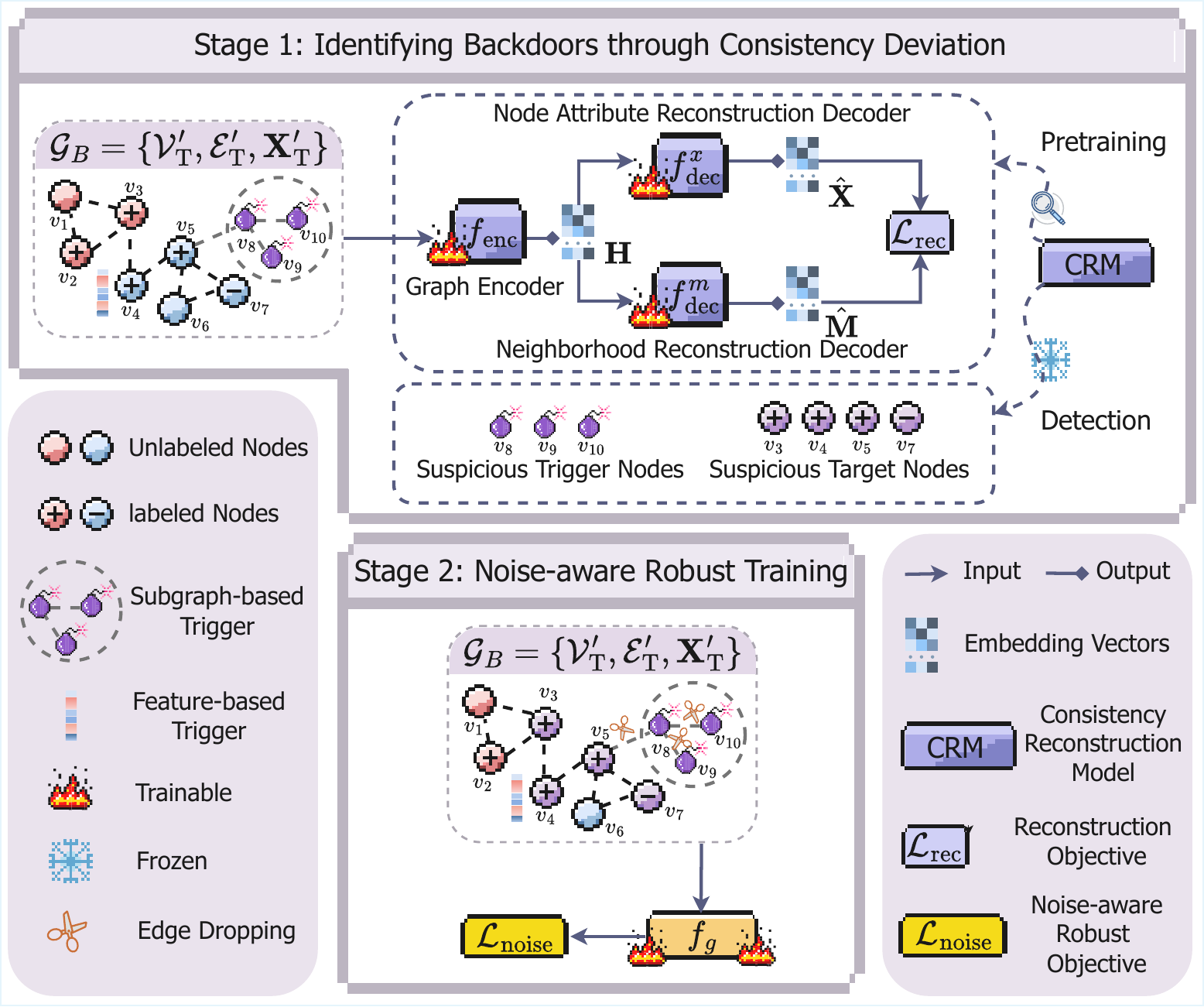} %

\caption{ Framework of CoGBD. }
\vspace{-1em}
\label{fig:framework}
\end{figure}

The above analysis reveals that backdoors exhibit pronounced discrepancies in feature consistency with their surrounding neighborhoods compared to clean nodes.
Motivated by this, we propose CoGBD (Consistency-guided Graph Backdoor Defense), a universal graph backdoor defense framework.
As depicted in Fig.~\ref{fig:framework}, CoGBD follows a two-stage pipeline.
In the first stage, a self-supervised consistency reconstruction model (CRM) is pretrained to capture dominant node--neighborhood patterns and identify both poisoned target nodes and trigger nodes via reconstruction errors.
In the second stage, to fully eliminate the influence of malicious nodes while mitigating noise introduced by detection errors, we develop a noise-aware robust training strategy for reliable model defense.

\subsection{Identifying Backdoors through Consistency Deviation}
\myparagraph{Consistency Reconstruction Pretraining.}
We characterize node--neighborhood consistency via neighborhood-informed reconstruction.
Deviations from this consistency hinder reconstruction and lead to larger reconstruction errors, facilitating the identification of nodes affected by backdoor attacks. 
To this end, we design a tri-directional reconstruction model that explicitly captures consistency among node attributes, neighborhood context, and their cross-level alignment on the backdoored graph.

\noindent\underline{\textit{Graph Encoder.}}
We use a two-layer GCN with ReLU activations as the graph encoder
$f_{\mathrm{enc}}$ to map nodes into the latent space.
Given a backdoored graph $\mathcal{G}_{\mathrm{B}}$, the hidden embeddings are computed as:
\begin{equation}
\mathbf{H} = f_{\mathrm{enc}}(\mathbf{X}'_{\mathrm{T}}, \mathbf{A}'_{\mathrm{T}})
\in \mathbb{R}^{N \times d},
\end{equation}
where $d$ denotes the hidden dimension, and $\mathbf{A}'_{\mathrm{T}}$ denotes the adjacency matrix of the backdoored graph.

\noindent\underline{\textit{Tri-directional Consistency Reconstruction Objectives.}}
As shown in Sec.~\ref{sec:rethinking}, clean nodes typically dominate a backdoored graph and preserve relatively stable node-neighborhood consistency, in contrast to target and trigger nodes.
We therefore learn normal patterns via a tri-directional reconstruction objective, such that backdoored deviations are amplified as larger reconstruction errors.
Formally, the overall reconstruction objective $\mathcal{L}_{\mathrm{rec}}$ is defined as:
\begin{equation}
\label{eq:overall-loss}
\mathcal{L}_{\mathrm{rec}}
=
\underbrace{\|\mathbf{X'}_\mathrm{T} - \hat{\mathbf{X}}\|_F^2}_{\mathcal{L}_\mathbf{node}}
+
\alpha 
\underbrace{\|\mathbf{M} - \hat{\mathbf{M}}\|_F^2}_{\mathcal{L}_\mathbf{neigh}}
+
\beta 
\underbrace{\|\mathbf{X'}_\mathrm{T} - \hat{\mathbf{M}}\|_F^2}_{\mathcal{L}_\mathbf{homo}},
\end{equation}
where $\mathbf{M}=\mathbf{A}'_{\mathrm{T}}\mathbf{X}'_{\mathrm T}$ denotes the aggregated neighborhood matrix in original graph.
Here, $\hat{\mathbf{X}} = f_{\mathrm{dec}}^{x}(\mathbf{H}) \in \mathbb{R}^{N \times F}$
and $\hat{\mathbf{M}} = f_{\mathrm{dec}}^{m}(\mathbf{H}) \in \mathbb{R}^{N \times F}$
are the reconstructed node attributes and neighborhood representations, respectively,
produced by two lightweight MLP decoders:
$f_{\mathrm{dec}}^{x}$ for node-level reconstruction and
$f_{\mathrm{dec}}^{m}$ for neighborhood-level reconstruction.
The hyperparameters $\alpha$ and $\beta$ balance the contributions of neighborhood
consistency and cross-level alignment.

Terms in Eq.~\eqref{eq:overall-loss} capture complementary aspects of graph consistency:
(i) $\mathcal{L}_{\mathrm{node}}$ preserves individual attribute regularity;
(ii) $\mathcal{L}_{\mathrm{neigh}}$ models local contextual patterns through neighborhood
reconstruction; and
(iii) $\mathcal{L}_{\mathrm{homo}}$ enforces cross-level alignment between node
attributes and their neighborhoods, corresponding to feature-based homophily.
By jointly optimizing these objectives, the reconstruction model is endowed with enhanced expressiveness to capture diverse and fine-grained node--neighborhood consistency patterns.


\myparagraph{Detection.}
The pretrained consistency reconstruction model is then leveraged to distinguish backdoors from clean nodes.
For each node $v_i$, we define its reconstruction error $e_i$ as:
\begin{equation}
\label{eq:stage2-score}
e_i
=
\|\mathbf{x}_i - \bar{\mathbf{x}}_i\|_2^2
+
\alpha \|\mathbf{m}_i - \bar{\mathbf{m}}_i\|_2^2
+
\beta \|\mathbf{x}_i - \bar{\mathbf{m}}_i\|_2^2,
\end{equation}
where $\mathbf{m}_i = [\mathbf M]_i$ is the original neighborhood representation of node $v_i$, and $\bar{\mathbf{x}}_i$ and $\bar{\mathbf{m}}_i$ denote the reconstructed node attribute and neighborhood representation, respectively.
Nodes are sorted in descending order of $e_i$. The top-$\rho$\% high-risk nodes are considered abnormal, while the remaining nodes are treated as clean.

\subsection{Noise-aware Robust Training}\label{sec:robust}

The detected backdoors can involve two types of nodes that play distinct roles in graph backdoor attacks.
Specifically, they include
(i) suspicious trigger nodes, which are typically outside the labeled training set
and serve as carriers of backdoor patterns, and
(ii) suspicious target nodes, which are labeled with the target class and inject
misleading supervision during training.

A simple and effective defense strategy is to remove edges incident to suspicious trigger nodes, which cuts off the most direct paths through which backdoor signals propagate.
However, pruning triggers alone is often insufficient.
For in-distribution attacks such as DPGBA, triggers are crafted to mimic target-class neighborhood contexts, making even a model trained on a cleaned graph still responsive to such triggers at test time.
Therefore, we further mitigate suspicious target nodes by 
minimizing the prediction confidence on their observed (typically target-class) labels, encouraging the model to counteract the impact of the trigger.
Let $\mathcal V_\mathrm{L}$ denote the labeled training nodes and
$\mathcal V_\mathrm{S}$ the suspicious target nodes.
A natural robust objective for a GNN classifier $f_g$ is therefore:
\begin{equation}
\label{eq:naive-robust}
\mathcal{L}_{\mathrm{robust}}
=
\sum_{v_i\in\mathcal V_\mathrm{L}}
\ell\!\left(f_g(v_i),y_i\right)
\;-\;
\lambda
\sum_{v_i\in\mathcal V_{\mathrm{S}}}
\ell\!\left(f_g(v_i),y_i\right),
\end{equation}
where $\ell(\cdot,\cdot)$ denotes the cross-entropy loss and
$\lambda$ controls the suppression strength applied to suspicious supervision.

\myparagraph{Noise-aware Robust Objective.}
While Eq.~\eqref{eq:naive-robust} explicitly reduces the influence of suspicious target supervision, its effectiveness still hinges on accurately distinguishing
target nodes from clean ones.
In practice, graph backdoor attacks are highly sparse and imbalanced: only a small fraction of nodes are poisoned, often accompanied by multiple trigger nodes, while the vast majority remain clean.
As a result, hard partitioning inevitably introduces false positives (i.e., clean nodes mistakenly included in $\mathcal V_S$) and false negatives (i.e., poisoned target nodes that remain in $\mathcal V_L$), which may either suppress reliable supervision or leave poisoned signals
insufficiently mitigated.
To address this issue, we incorporate reconstruction errors as a soft reliability indicator to guide a noise-aware robust training.
Formally, we define a suspiciousness score as:
\begin{equation}
s_i
=
\sigma\!\left(
-\frac{e_i - \mu_e}{\tau\,\sigma_e}
\right),
\end{equation}
where $\mu_e$ and $\sigma_e$ denote the mean and standard deviation of
reconstruction errors over the graph, $\tau$ is a temperature parameter and $\sigma(\cdot)$ is the sigmoid function.
Intuitively, larger $s_i$ values indicate stronger deviation from normal node--neighborhood consistency. 
For false positives, moderate $s_i$ should not lead to aggressive suppression, so that reliable supervision is preserved.
For false negatives, larger $s_i$ should reduce their effective influence, limiting the impact of poisoned labels even when hard separation fails.
Accordingly, we design two noise-aware node weights based on $s_i$:
\begin{equation}
w_i^{L}=(1-s_i)^{a},\qquad
w_i^{S}=s_i^{b},
\end{equation}
where $w_i^{L}$ and  $w_i^{S}$ reweight supervision on clean nodes and suspicious nodes, respectively. The exponents $a$ and $b$ are fixed and used to adjust the sharpness of the weights. 
Therefore, the final noise-aware robust objective is augmented as:
\begin{equation}
\label{eq:robust}
\mathcal{L}_{\mathrm{noise}}
=
\sum_{v_i\in\mathcal V_\mathrm{L}}
w_i^{L}\,\ell\!\left(f_g(v_i),y_i\right)
\;-\;
\lambda
\sum_{v_i\in\mathcal V_{\mathrm{S}}}
w_i^{S}\,\ell\!\left(f_g(v_i),y_i\right).
\end{equation}
The training algorithm and the corresponding complexity analysis are provided in Appendix~\ref{app:algorithem-complexity}.

\section{Experiments}
In this section, we conduct extensive experiments to answer the following research questions:
\begin{enumerate}[label=$\bullet$\ \textbf{RQ\arabic*:}]
    \item How effective is CoGBD in defending against both subgraph- and feature-based graph backdoor attacks?
    \item How well does CoGBD detect backdoors, including trigger nodes and poisoned target nodes?
    \item How do the key components contribute to the defense performance of CoGBD?
   \item How do different hyperparameters affect the ASR–ACC trade-off of CoGBD?
\end{enumerate}
\subsection{Experimental Setup}

\begin{table}[t]
\centering
\caption{Statistics of datasets.}
\label{dataset}
\small
\renewcommand{\arraystretch}{1}

\begin{tabular}{%
C{1.3cm}C{1.3cm}C{1.6cm}C{1.3cm}C{1.2cm}
}
\toprule
\textbf{Dataset}
& \textbf{$|\mathcal{V}|$}
& \textbf{$|\mathcal{E}|$}
& \textbf{\# Features}
& \textbf{\# Classes} \\
\midrule
Cora      & 2,708    & 5,429      & 1,443 & 7  \\
Pubmed    & 19,717   & 44,338     & 500   & 3  \\
Flickr    & 89,250   & 899,756    & 500   & 7  \\
OGB-arxiv & 169,343  & 1,166,243  & 128   & 40 \\
\bottomrule
\end{tabular}
\end{table}

\begin{table*}[t]
\centering
\caption{Results of backdoor defense.}
\label{tab:main}
\normalsize
\renewcommand{\arraystretch}{1.0}
\setlength{\tabcolsep}{3.0pt}

\begin{tabular}{%
C{1.1cm}|L{1.6cm}|
C{1.6cm}C{1.6cm}|
C{1.6cm}C{1.6cm}|
C{1.6cm}C{1.6cm}|
C{1.6cm}C{1.6cm}
}
\toprule
\multirow{2}{*}{\textbf{Attacks}} &
\multirow{2}{*}{\textbf{Defense}} &
\multicolumn{2}{c|}{\textbf{Cora}} &
\multicolumn{2}{c|}{\textbf{Pubmed}} &
\multicolumn{2}{c|}{\textbf{Flickr}} &
\multicolumn{2}{c}{\textbf{OGB-arxiv}} \\
\cline{3-10}
& &
\textbf{ASR(\%)}$\down$ & \textbf{ACC(\%)}$\up$ &
\textbf{ASR(\%)}$\down$ & \textbf{ACC(\%)}$\up$ &
\textbf{ASR(\%)}$\down$ & \textbf{ACC(\%)}$\up$ &
\textbf{ASR(\%)}$\down$ & \textbf{ACC(\%)}$\up$ \\
\midrule

\multirow{9}{*}{\textbf{GTA}}
& GCN        & 99.78 & 82.74 & 97.40 & 84.59 & 100.00 & 45.44 & 92.72 & 65.06 \\
& GNNGuard   & 38.97 & 77.26 & 27.22 & 79.62 & 3.60   & 44.99 & 1.05  & 65.02 \\
& RobustGCN  & 100.00& 82.52 & 100.00& 85.47 & 99.82  & 41.03 & 86.86 & 61.01 \\
& RS         & 52.99 & 75.56 & 52.63 & 84.46 & 40.83  & 41.55 & 38.71 & 61.63 \\
& ABL        & 31.37 & 81.19 & 50.68 & 84.51 & 0.00   & 41.18 & 45.19 & 64.12 \\
& Prune      & 20.59 & 82.52 & 21.88 & 84.91 & 0.00   & 41.99 & 0.09  & 66.04 \\
& OD         & 59.56 & 82.81 & 40.00 & 85.28 & 0.00   & 42.64 & 0.00  & 66.89 \\
& RIGBD      & 9.74  & 84.74 & 2.30  & 84.56 & 0.00   & 44.51 & 0.00  & 62.74 \\
\rowcolor[gray]{0.9}
\cellcolor{white} & \textbf{CoGBD} & 0.00 & 83.48 & 0.89 & 85.20 & 0.00 & 45.36 & 0.00 & 65.08 \\
\midrule

\multirow{9}{*}{\textbf{UGBA}}
& GCN        & 99.63 & 82.15 & 99.52 & 85.13 & 98.83 & 43.61 & 99.14 & 65.51 \\
& GNNGuard   & 28.41 & 77.41 & 19.95 & 80.02 & 0.00  & 43.88 & 78.65 & 66.43 \\
& RobustGCN  & 88.19 & 82.37 & 94.14 & 85.48 & 92.32 & 40.70 & 94.93 & 61.19 \\
& RS         & 49.30 & 76.96 & 50.55 & 84.40 & 27.43 & 41.27 & 27.05 & 60.62 \\
& ABL        & 55.13 & 83.11 & 28.75 & 84.72 & 0.00  & 40.89 & 79.07 & 64.21 \\
& Prune      & 97.34 & 82.00 & 100.00& 85.21 & 98.95 & 41.82 & 92.73 & 63.79 \\
& OD         & 0.00  & 83.33 & 23.45 & 84.93 & 0.00  & 41.20 & 0.04  & 65.01 \\
& RIGBD      & 13.21 & 84.00 & 11.76 & 84.51 & 0.00  & 44.87 & 0.00  & 66.33 \\
\rowcolor[gray]{0.9}
\cellcolor{white} & \textbf{CoGBD} & 0.00 & 84.30 & 1.12 & 84.33 & 0.00 & 44.36 & 2.88 & 65.42 \\
\midrule

\multirow{9}{*}{\textbf{DPGBA}}
& GCN        & 96.61 & 82.37 & 99.03 & 83.62 & 99.98 & 44.14 & 96.75 & 65.06 \\
& GNNGuard   & 89.37 & 78.07 & 96.08 & 81.73 & 92.85 & 44.57 & 96.79 & 65.07 \\
& RobustGCN  & 97.56 & 80.22 & 97.57 & 84.97 & 100.00& 41.43 & 96.58 & 65.10 \\
& RS         & 51.73 & 74.89 & 64.62 & 82.99 & 97.31 & 42.10 & 47.01 & 60.14 \\
& ABL        & 96.90 & 79.85 & 99.65 & 82.25 & 94.80 & 41.24 & 50.65 & 63.22 \\
& Prune      & 20.81 & 80.44 & 40.71 & 84.39 & 91.67 & 43.78 & 0.16  & 63.45 \\
& OD         & 94.10 & 81.78 & 99.59 & 84.37 & 99.29 & 44.53 & 96.86 & 64.70 \\
& RIGBD      & 12.40 & 82.96 & 1.61  & 83.67 & 9.81  & 43.23 & 0.02  & 65.02 \\
\rowcolor[gray]{0.9}
\cellcolor{white} & \textbf{CoGBD} & 0.00 & 84.67 & 0.00 & 85.47 & 0.00 & 43.19 & 0.06 & 65.98 \\
\midrule

\multirow{9}{*}{\textbf{SPEAR}}
& GCN        & 97.27 & 84.07 & 92.84 & 84.91 & 100.00 & 44.33 & 96.56 & 66.90 \\
& GNNGuard   & 52.55 & 79.63 & 70.12 & 81.48 & 44.86  & 44.29 & 97.51 & 67.82 \\
& RobustGCN  & 90.77 & 81.41 & 91.45 & 85.54 & 100.00 & 40.80 & 95.99 & 61.37 \\
& RS         & 85.61 & 76.22 & 87.10 & 84.68 & 96.78  & 41.23 & 83.31 & 61.19 \\
& ABL        & 97.12 & 83.93 & 93.12 & 84.58 & 20.00  & 40.90 & 95.08 & 64.41 \\
& Prune      & 99.56 & 81.85 & 95.38 & 85.00 & 95.89  & 42.01 & 98.91 & 65.65 \\
& OD         & 99.41 & 83.63 & 90.01 & 85.41 & 90.39  & 40.70 & 72.55 & 66.17 \\
& RIGBD      & 62.28 & 83.88 & 89.95 & 85.41 & 99.99  & 43.53 & 96.61 & 66.63 \\
\rowcolor[gray]{0.9}
\cellcolor{white} & \textbf{CoGBD} & 0.00 & 83.41 & 7.27 & 84.76 & 0.00 & 43.62 & 0.00 & 67.23 \\
\bottomrule
\end{tabular}
\end{table*}

\myparagraph{Datasets.}
We conduct experiments on four public real-world datasets, i.e., Cora, Pubmed~\cite{cora}, Flickr~\cite{flickr}, and OGB-arxiv~\cite{ogb},
which are widely used for inductive semi-supervised node classification.
Detailed statistics of the datasets are summarized in Table~\ref{dataset}.

\myparagraph{Attack Methods.} 
To validate the defense capability of CoGBD, we evaluate it against four
state-of-the-art graph backdoor attacks, including three subgraph-based attacks
(GTA~\cite{gta2021}, UGBA~\cite{ugba2023}, and DPGBA~\cite{dpgba2024}) and one
recent feature-based attack (SPEAR~\cite{spear2025}). Detailed descriptions are
provided in Appendix~\ref{app:attacks}.

\myparagraph{Compared Methods.}
We compare CoGBD with three competitive defense methods tailored for graph backdoor attacks, including Prune~\cite{ugba2023}, OD~\cite{dpgba2024}, and RIGBD~\cite{rigbd2024}. 
In addition, ABL~\cite{abl2021}, a popular backdoor defense in the image domain, is also included for comparison.
Moreover, we incorporate three representative robust GNN models ( RobustGCN~\cite{robustgcn2019}, GNNGuard~\cite{gnnguard2020}, and randomized smoothing (RS)~\cite{rs2021} ) to examine whether general robustness architectures can effectively mitigate graph backdoor attacks. Detailed descriptions of these defense methods are provided in Appendix~\ref{app:defense}.

\myparagraph{Evaluation Protocol.}
We follow the inductive node classification setting commonly adopted in recent graph backdoor studies~\cite{ugba2023,dpgba2024,spear2025,rigbd2024}. 
Specifically, each dataset is randomly split into two disjoint subgraphs, denoted as the training graph $\mathcal{G}_{T}$ and the unseen graph $\mathcal{G}_{U}$, with a ratio of 80:20.
The attacker is first trained on $\mathcal{G}_{T}$ to learn the trigger generation strategy. 
Subsequently, a set of target nodes $\mathcal{V}_{B}$ is selected from $\mathcal{G}_{T}$, and backdoor triggers are injected into these nodes to construct the backdoored training graph $\mathcal{G}_{B}$. 
The number of target nodes $|\mathcal{V}_{B}|$ is set to 40, 160, 160, and 565 for Cora, Pubmed, Flickr, and OGB-arxiv, respectively.
To evaluate the attack effectiveness and defense performance under the inductive setting, nodes in $\mathcal{G}_{U}$ are further divided into two equal-sized subsets. 
One subset is poisoned with the learned backdoor triggers and used to evaluate the attack success rate (ASR), while the remaining subset is kept clean and used to measure the clean accuracy (ACC). 
In addition to classification performance, we evaluate the ability of CoGBD to identify
backdoors. Specifically, we report Recall$_{\mathrm{tar}}$, the fraction of poisoned target nodes correctly detected, and Recall$_{\mathrm{tri}}$, the fraction
of trigger nodes correctly detected.

\myparagraph{Implementation Details.}
For subgraph-based attacks, following~\cite{ugba2023,dpgba2024,rigbd2024}, the trigger size is fixed to three nodes across all datasets.
For the feature-based attack SPEAR, following~\cite{spear2025}, the trigger dimension is set to $\max(0.02F, 5)$, where $F$ denotes the dimension of node features.
At the CoGBD detection stage, we set the suspicious ratio to $\rho = 3\%$ by default. During the noise-aware robust training stage, a two-layer GCN is used as the backbone classifier. The weight $\alpha$ and $\beta$ are both chosen from $[2^{-4}, 2^{-3},\cdots, 2^4 ]$. 
The temperature parameter $\tau$ is varied from 0.1 to 1.0 in increments of 0.1, and the suppression strength $\lambda$ is varied from 0.0 to 1.0 with the same step size.
Each experiment is repeated five times with different random seeds, and we report the average performance.
Full implementation details are provided in Appendix~\ref{app:implementations}.

\subsection{Performance of Defense}\label{sec:main-performance}
\myparagraph{Main Results.}
To answer \textbf{RQ1}, we compare CoGBD with representative defense methods across four datasets under both subgraph- and feature-based graph backdoor attacks.
From the results summarized in Table~\ref{tab:main} 
we draw two key observations:
(1) CoGBD consistently achieves near-zero ASR across all datasets and attack types. In contrast, RIGBD is effective only against subgraph-based attacks and fails under feature-based GBAs. For instance, under SPEAR on OGB-arxiv, RIGBD yields an ASR of 96.61\%, whereas CoGBD reduces the ASR to 0\%. 
This advantage mainly stems from our consistency reconstruction model, which accurately identifies both poisoned target nodes and trigger nodes by capturing local node--neighborhood inconsistency, thereby enabling CoGBD to effectively defend against diverse backdoor triggers and attack paradigms.
(2) 
CoGBD achieves clean accuracy comparable to, and in some cases slightly higher than, that
of the vanilla GCN.
This improvement is attributed to our noise-aware robust training objective, which
assigns node weights based on reconstruction errors for both clean and suspicious nodes,
thereby alleviating the impact of detection noise.
Overall, these results show that CoGBD achieves strong backdoor defense performance across
diverse attacks.

\myparagraph{Additional Evaluation Results.}
More performance evaluations are deferred to the appendix due to space limitations, including clean-graph performance of CoGBD (Appendix~\ref{app:clean-graph-performance}), robustness with different attack budgets (Appendix~\ref{app:attack-budget}), and transferability in different GNN backbones (Appendix~\ref{app:backbones}).

\subsection{Ability to Detect Backdoors}
\begin{table}
\centering
\caption{\label{recall}
Results for the ability to detect backdoors.}
\small
\renewcommand{\arraystretch}{1.0}
\setlength{\tabcolsep}{6pt}

\begin{tabular}{
C{0.8cm}
C{1.4cm}
C{0.6cm}
C{0.6cm}
C{1.2cm}
C{1.2cm}
}
\hline
\textbf{Attacks} 
& \textbf{Clean ACC} 
& \textbf{ASR} 
& \textbf{ACC} 
& \textbf{Recall$_{\text{tar}}$} 
& \textbf{Recall$_{\text{tri}}$} \\
\hline
GTA   & 65.17 & 0.00 & 65.08 & 100.00 & 100.00 \\
UGBA  & 65.17 & 2.88 & 65.42 & 91.27  & 100.00 \\
DPGBA & 65.17 & 0.06 & 65.98 & 85.92  & 69.27   \\
SPEAR & 65.17 & 0.00 & 67.23 & 100.00 & --     \\
\hline
\end{tabular}
\vspace{-1em}
\end{table}

To answer \textbf{RQ2}, we report in Table~\ref{recall} the recall of poisoned
target nodes ($\mathrm{Recall}_{\mathrm{tar}}$) and trigger nodes
($\mathrm{Recall}_{\mathrm{tri}}$) on OGB-arxiv, together with Clean ACC
(trained on the clean graph), ASR, and ACC.
From these results, we find that:
(1) CoGBD consistently achieves over $80\%$ recall on poisoned target nodes
under both feature-based and subgraph-based attacks, highlighting the
prevalence of feature-based homophily discrepancies induced by backdoor attacks and
thereby supporting our theoretical insights.
The effectiveness of CoGBD is particularly pronounced under feature-based GBAs, indicating that CoGBD effectively captures the most stealthy node--neighborhood inconsistencies introduced by feature-level perturbations.
(2)
CoGBD also demonstrates strong detection capability for trigger nodes under subgraph-based GBAs in most cases.
For GTA and UGBA, the recall on trigger nodes reaches $100\%$; 
For DPGBA, the recall on trigger nodes is lower, as such triggers are designed to mimic in-distribution patterns and continue to mislead the model at test time. In this case, accurately identifying poisoned target nodes is more critical for
suppressing ASR, and CoGBD effectively prioritizes target-node detection, which is sufficient to reduce the ASR to near zero despite partial trigger identification.
Overall, these results demonstrate that CoGBD reliably identifies critical backdoor-related nodes, enabling effective mitigation across diverse attack paradigms while preserving clean performance.
Additional experiments on other datasets are provided in Appendix~\ref{app:detect}.

\begin{figure*}[t]
  \centering
  \vspace{-1em}
  \begin{subfigure}[t]{0.24\textwidth}
    \centering
    \includegraphics[width=\linewidth]{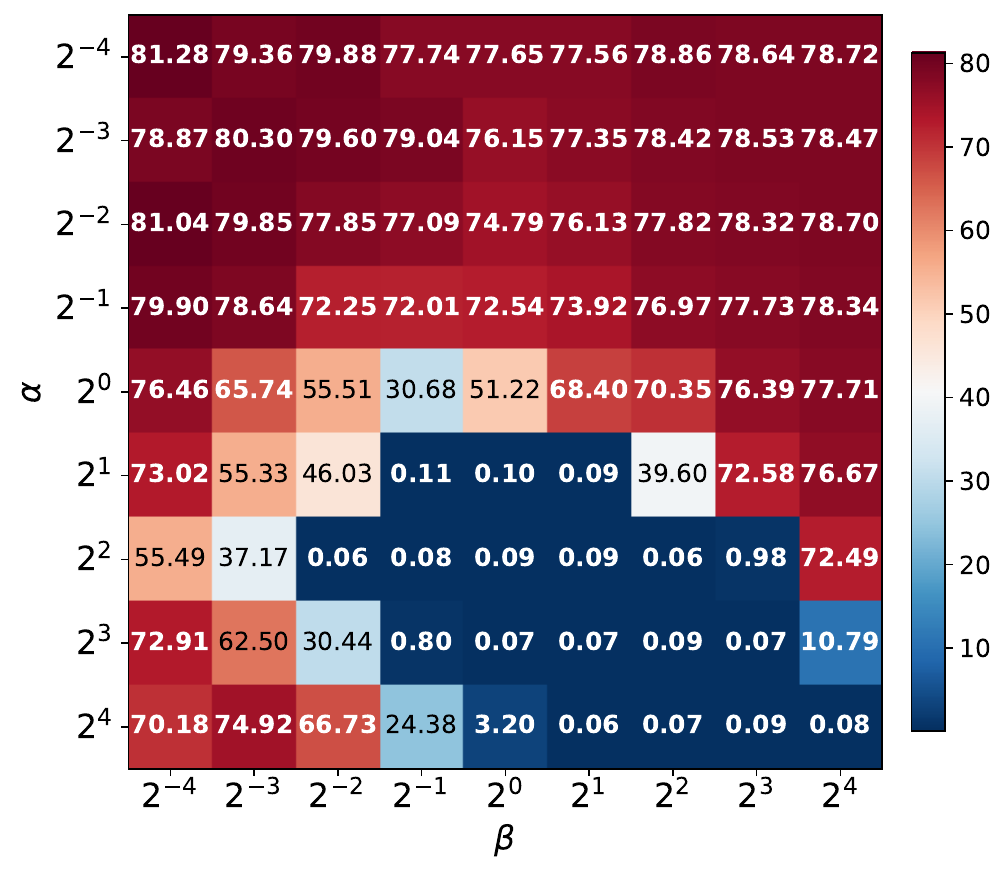}
    \caption{Heatmap of ASR.}
    \label{fig:heat_acc}
  \end{subfigure}
  \begin{subfigure}[t]{0.24\textwidth}
    \centering
    \includegraphics[width=\linewidth]{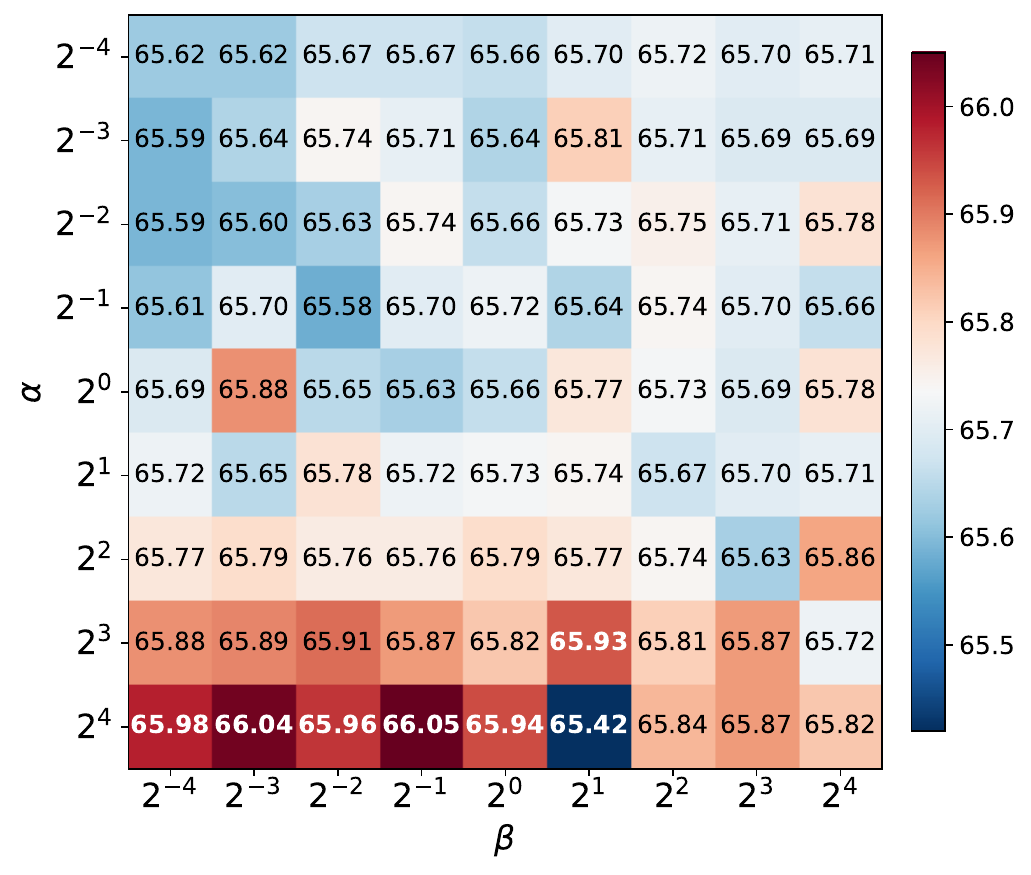}
    \caption{Heatmap of ACC.}
    \label{fig:heat_asr}
  \end{subfigure}
  \begin{subfigure}[t]{0.24\textwidth}
    \centering
    \includegraphics[width=\linewidth]{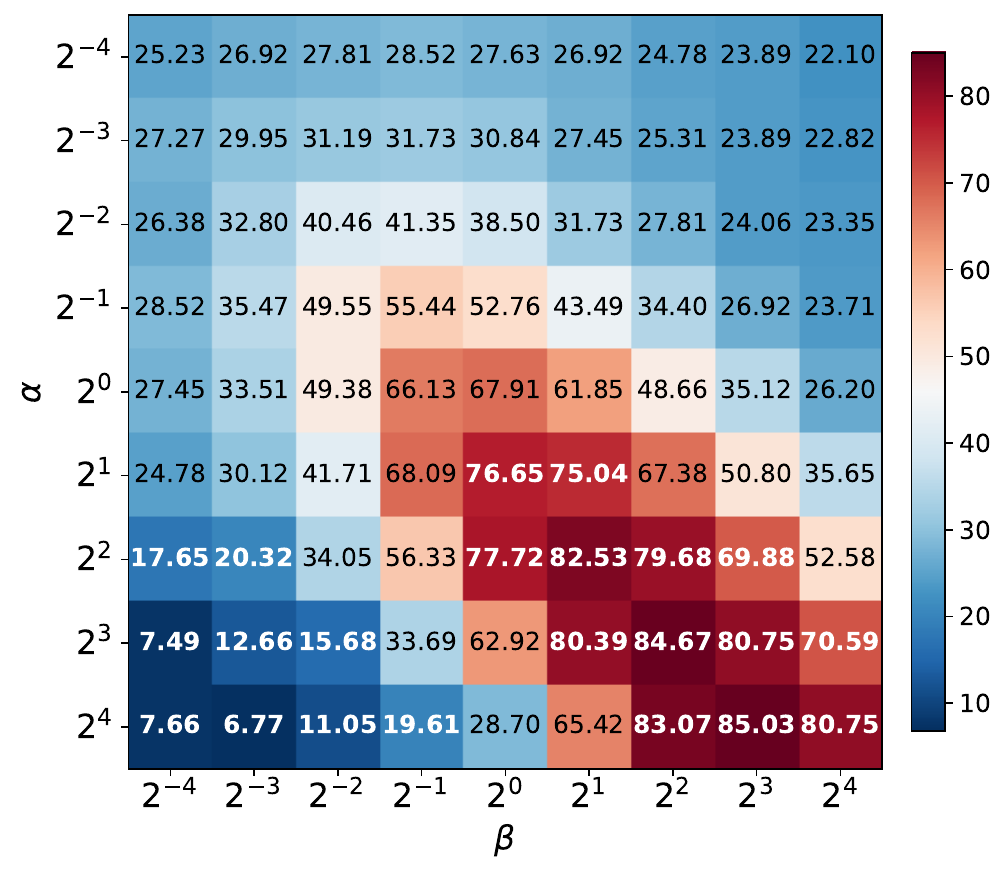}
    \caption{Heatmap of Recall$_{\text{tar}}$.}
    \label{fig:heat_recall}
  \end{subfigure}
  \begin{subfigure}[t]{0.24\textwidth}
    \centering
    \includegraphics[width=\linewidth]{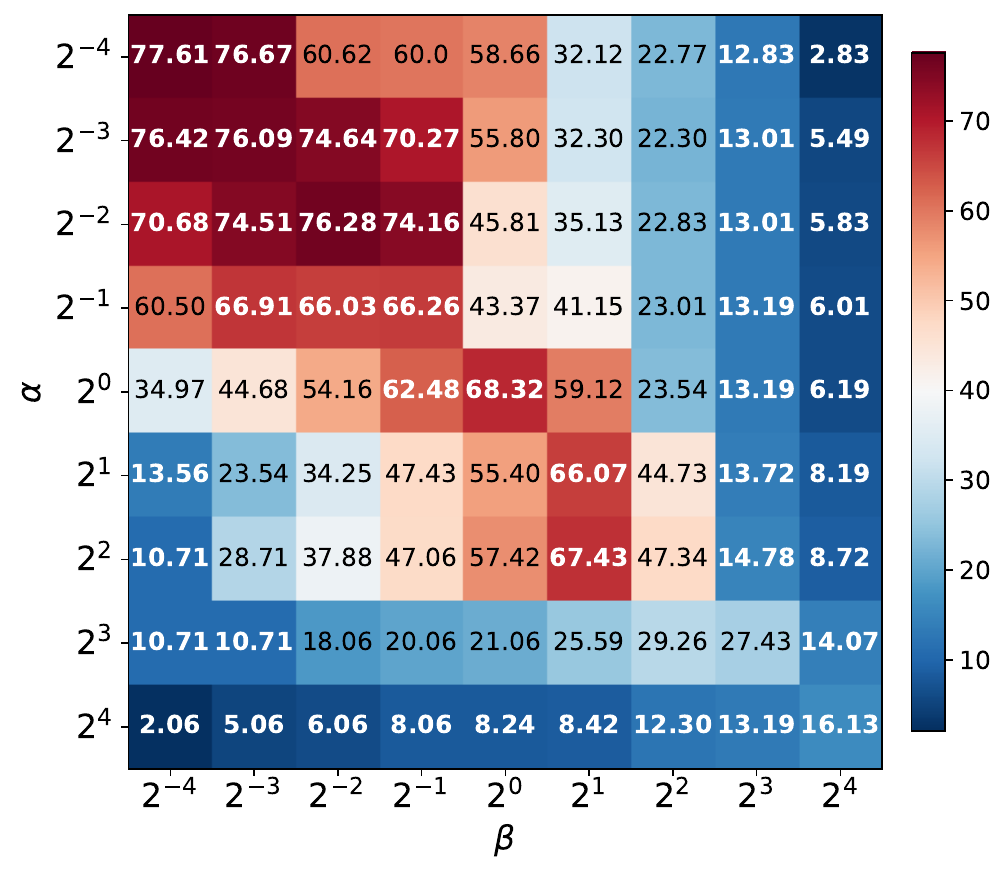}
    \caption{Heatmap of Recall$_{\text{tri}}$.}
    \label{fig:heat_precision}
  \end{subfigure}
    \vspace{-0.5em}
  \caption{Sensitivity analysis of $\alpha$ and $\beta$.}
  \label{fig:sensitivity-dpgba}
\end{figure*}
\subsection{Ablation Study}

\begin{table}
\centering
\caption{Ablation study on key components.
}
\small
\renewcommand{\arraystretch}{1.0}

\begin{tabular}{%
p{1.3cm}|
p{0.5cm}p{0.5cm}|
p{0.5cm}p{0.5cm}|
p{0.5cm}p{0.5cm}|
p{0.5cm}p{0.5cm}
}
\toprule
 & \multicolumn{2}{c|}{\textbf{GTA}}
 & \multicolumn{2}{c|}{\textbf{UGBA}}
 & \multicolumn{2}{c|}{\textbf{DPGBA}}
 & \multicolumn{2}{c}{\textbf{SPEAR}} \\
\cline{2-9}
 & ASR & ACC
 & ASR & ACC
 & ASR & ACC
 & ASR & ACC \\
\midrule
\textbf{w/o $\mathcal{L}_\mathrm{node}$}  & 0.00 & 64.42 & 55.66 & 64.63 & 0.08 & 65.81 & 0.00 & 66.42 \\
\textbf{w/o $\mathcal{L}_\mathrm{neigh}$} & 0.06 & 64.64 & 42.74 & 64.26 & 78.46 & 65.61 & 0.72 & 66.86 \\
\textbf{w/o $\mathcal{L}_\mathrm{homo}$} & 23.00 & 64.51 & 54.16 & 64.66 & 79.31 & 65.67 & 0.00 & 66.31 \\
\midrule
\textbf{w/o $\mathcal{L}_\mathrm{noise}$}  & 0.17 & 60.51 & 2.23  & 63.64 & 4.38  & 63.37 & 0.28 & 63.92 \\
\midrule
\rowcolor[gray]{0.9}
\textbf{Ours} & 0.00 & 65.08 & 2.88 & 65.42 & 0.06 & 65.98 & 0.00 & 67.23 \\
\bottomrule
\end{tabular}
\label{tab:ablation}

\end{table}

To answer \textbf{RQ3}, we compare the full CoGBD with four ablated variants:
w/o $\mathcal L_{\mathrm{node}}$, w/o $\mathcal L_{\mathrm{neigh}}$, w/o
$\mathcal L_{\mathrm{homo}}$, and w/o $\mathcal{L}_\mathrm{noise}$, which remove the node-level,
neighborhood-level, feature-homophily reconstruction signals, and the
noise-aware node weights, respectively.
We evaluate all variants on OGB-arxiv under four representative attacks and report ASR and ACC in Table~\ref{tab:ablation}.
From the table, we observe that:
(1) Removing any reconstruction signal compromises the universality of the defense, preventing the model from achieving consistently low ASR across all attack types. 
Although removing certain components remains effective for specific attacks (e.g., ``w/o $\mathcal{L}_{\text{node}}$'' on GTA, DPGBA, and SPEAR), as these attacks are more sensitive to neighborhood-level or cross-level inconsistencies, this behavior does not generalize to UGBA.
This indicates that jointly modeling node-level, neighborhood-level, and feature-based homophily reconstruction signals is essential for capturing diverse GBA-induced inconsistencies and is necessary for achieving universal defense.
(2) CoGBD consistently exhibits higher clean accuracy than ``w/o $\mathcal{L}_\mathrm{noise}$''.
For instance, CoGBD improves clean accuracy by $4.57\%$ and $3.31\%$ over ``w/o $\mathcal{L}_\mathrm{noise}$'' under GTA and SPEAR, respectively.
Unlike ``w/o $\mathcal{L}_\mathrm{noise}$'' variant, which treats all suspicious target nodes equally and is sensitive to detection noise, CoGBD incorporates noise-aware
node weights into training, thereby mitigating noisy supervision and leading to more stable training.


\subsection{Parameter Sensitivity Analysis}\label{sec:sensitivity}

\begin{figure}[t!]
  \centering
  \begin{subfigure}[b]{0.48\linewidth}
    \centering
    \includegraphics[width=\linewidth]{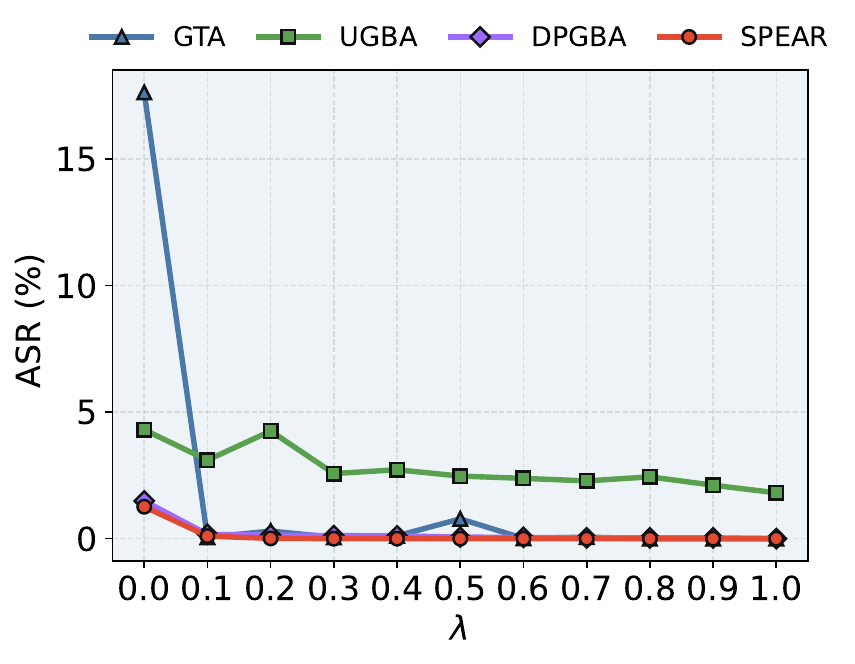}
    \caption{The results of ASR(\%).}
  \end{subfigure}
  \hfill
  \begin{subfigure}[b]{0.48\linewidth}
    \centering
    \includegraphics[width=\linewidth]{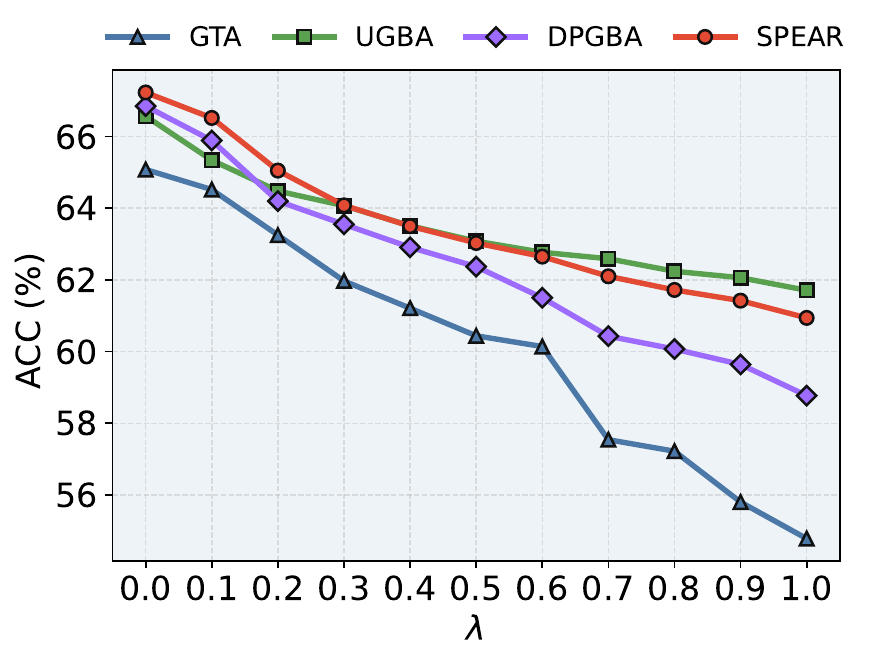}
    \caption{The results of ACC(\%).}
    
  \end{subfigure}
   \vspace{-0.5em}
  \caption{Sensitivity analysis of $\lambda$. }
  \label{fig:sensitivity-lambda}
 \vspace{-1em}
\end{figure}

To answer \textbf{RQ4}, we conduct a sensitivity analysis of CoGBD with respect to $\alpha$ and $\beta$, which balance different reconstruction signals for consistency modeling, and $\lambda$, which controls the strength of suppressing suspicious nodes during noise-aware robust training.

\myparagraph{Effect of $\alpha$ and $\beta$.}
We analyze the impact of the reconstruction weights $\alpha$ and $\beta$ by varying them from $2^{-4}$ to $2^{4}$ and evaluating CoGBD on OGB-arxiv under DPGBA,
the most stealthy subgraph-based GBA.
As shown in Fig.~\ref{fig:sensitivity-dpgba}, increasing $\alpha$ and $\beta$
steadily improves $\mathrm{Recall}_\mathrm{tar}$ while consistently reducing ASR.
In particular, when $\alpha > 0$ and $\beta > 2^{1}$, CoGBD achieves
near-zero ASR with strong clean accuracy.
This indicates that, under DPGBA, emphasizing
neighborhood-level and feature-based homophily reconstruction is more effective for identifying anomalous target nodes.
Accurate detection of poisoned target nodes directly suppresses backdoor activation, leading to lower ASR and slightly improved ACC.
Additional sensitivity results under feature-based GBA are provided in Appendix~\ref{app:sensitivity-a-b-spear}.

\myparagraph{Effect of $\lambda$.}
We study the effect of the suppression weight $\lambda$ by varying it from $0$ to $1.0$ with a step size of $0.1$, and evaluate CoGBD on OGB-arxiv under four representative attacks.
Fig.~\ref{fig:sensitivity-lambda} shows that increasing $\lambda$ consistently
reduces ASR while gradually lowering ACC, revealing a clear trade-off between
backdoor suppression and clean performance. 
This trend arises because larger values of $\lambda$ impose stronger unlearning on poisoned target nodes, effectively weakening trigger-induced patterns, but also amplify the impact of false positives, introducing training noise and degrading accuracy.
Overall, $\lambda=0.1$ provides a favorable balance in most cases, achieving low ASR while preserving high clean accuracy.

\myparagraph{More Hyperparameters Analysis.}
We analyze the impact of more hyperparameters in Appendix~\ref{app:more-senstivity} due to the space limitation.

\section{Conclusion}

In this paper, we study a novel problem of universal graph backdoor defense. To bridge subgraph-based and feature-based attacks, we introduce a feature-based homophily perspective that characterizes node–neighborhood feature consistency.
Through both theoretical analysis and empirical validation, we demonstrate that effective graph backdoor attacks consistently induce a distributional gap in feature-based homophily between clean nodes and backdoors.
Building on this insight, we propose CoGBD, a consistency-guided framework for universal graph backdoor defense.
By jointly modeling node-, neighborhood-, and cross-level consistency via multi-view reconstruction, CoGBD effectively identifies backdoors. Furthermore, a noise-aware robust training strategy mitigates the impact of detection uncertainty while eliminating trigger effects.
Extensive experiments demonstrate that our method substantially reduces the attack success rate and preserves clean accuracy across both subgraph-based and feature-based graph backdoor attacks.

\clearpage
\newpage


\clearpage
\newpage
\appendix
\section*{Appendix}
\setcounter{lemma}{0}
\renewcommand{\thelemma}{\arabic{lemma}}
\setcounter{theorem}{0}

\section{Detailed Proofs}

\noindent\textbf{Setup.}
For analytical clarity, we consider an $L$-layer linear GNN with normalized adjacency
matrix $\bar{\mathbf A}$.
Let $\mathbf H^{(l)}$ denote the node representations after the $l$-th message-passing
layer, given by
$
\mathbf H^{(l)} = \bar{\mathbf A}^{\,l}\mathbf X,
$
and let the final logits be $\mathbf Z = \mathbf H^{(l)}\mathbf W$.
For a node $v$, its $l$-layer representation is denoted by
$h_v^{(l)} \triangleq [\mathbf H^{(l)}]_v$.

\subsection{Proof of Lemma~\ref{lem:propagation}}
\label{App:propagation}
\begin{lemma}
Let $h_v^{(l)}$ and ${h'}_v^{(l)}$ denote the $l$-th layer representations
of a target node $v$ on the clean and backdoored graphs
$\mathcal{G}_{\mathrm{T}}$ and $\mathcal{G}_{\mathrm{B}}$, respectively.
We further denote
$
\pi^{(l)}_{vu} \triangleq (\bar{\mathbf A}^{\,l})_{vu},
\;
{\pi'}^{(l)}_{vu} \triangleq
\big((\bar{\mathbf A}+\Delta\bar{\mathbf A})^{\,l}\big)_{vu}.
$
Then, the representation shift
$\Delta h_v^{(l)} \triangleq {h'}_v^{(l)} - h_v^{(l)}$
can be written as:
\[
\Delta h_v^{(l)}
=
\sum_{u\in\mathcal{V}}
\big({\pi'}^{(l)}_{vu}-\pi^{(l)}_{vu}\big)\,x_u
+
\sum_{u\in\mathcal{V}}
{\pi'}^{(l)}_{vu}\,\Delta x_u ,
\]
where $x_u$ is the original feature vector of node $u$ and $\Delta x_u$ denotes the feature perturbation applied on node $u$.
\end{lemma}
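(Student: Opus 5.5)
The identity is purely algebraic, so the proof is a direct expansion of the linear propagation model fixed in the Setup. We work with the $L$-layer linear GNN, where the clean representations are $\mathbf H^{(l)}=\bar{\mathbf A}^{\,l}\mathbf X$. On the backdoored graph the representations are
\[
{\mathbf H'}^{(l)}=(\bar{\mathbf A}+\Delta\bar{\mathbf A})^{l}(\mathbf X+\Delta\mathbf X).
\]
Here $\Delta\bar{\mathbf A}$ is the change of the normalized adjacency induced by $\Delta\mathbf A$, including the change of degrees and therefore of normalization. We take this as the definition of $\Delta\bar{\mathbf A}$, so no first-order linearization is needed. Since $\Delta\mathbf X$ is the feature perturbation, rows of $\Delta\mathbf X$ are zero for untouched nodes and equal to the injected features for any added trigger nodes.

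\textbf{Handling injected nodes.} Subgraph triggers add new nodes, so the node set changes. We embed both graphs in the common index set $\mathcal V$ (the union of nodes). In the clean graph, injected nodes are isolated and carry $x_u=0$, and their perturbation $\Delta x_u$ equals their trigger features. With this convention every matrix has the same dimensions and the clean propagation is unaffected. This padding is the only mildly delicate step: isolated padded nodes in the clean graph must not influence clean rows, which holds because $x_u=0$ and they have no edges.

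\textbf{Main computation.} Write ${\mathbf P'}=(\bar{\mathbf A}+\Delta\bar{\mathbf A})^{l}$ and $\mathbf P=\bar{\mathbf A}^{\,l}$. Then
\[
{\mathbf H'}^{(l)}-\mathbf H^{(l)}={\mathbf P'}\mathbf X+{\mathbf P'}\Delta\mathbf X-\mathbf P\mathbf X=({\mathbf P'}-\mathbf P)\mathbf X+{\mathbf P'}\Delta\mathbf X .
\]
Taking row $v$ and using $[\mathbf M\mathbf X]_v=\sum_u M_{vu}x_u$ with $\pi^{(l)}_{vu}=P_{vu}$ and ${\pi'}^{(l)}_{vu}=P'_{vu}$ gives the stated formula directly.
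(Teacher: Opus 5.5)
Your proposal is correct and follows the same route as the paper: write ${h'}_v^{(l)}$ as the $v$-th row of $(\bar{\mathbf A}+\Delta\bar{\mathbf A})^{l}(\mathbf X+\Delta\mathbf X)$, subtract the $v$-th row of $\bar{\mathbf A}^{\,l}\mathbf X$, and regroup the terms. Your padding convention for injected trigger nodes (isolated in the clean graph, with $x_u=0$) is a helpful clarification that the paper leaves implicit when it sums over a common node set $\mathcal V$.
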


\begin{proof}
On the clean training graph $\mathcal G_{\mathrm T}$, the $l$-layer representation of
node $v$ is given by:
\begin{equation}
h^{(l)}_v
=
[\bar{\mathbf A}^{\,l}\mathbf X]_v
=
\sum_{u\in\mathcal V} (\bar{\mathbf A}^{\,l})_{vu}\,x_u
=
\sum_{u\in\mathcal V} \pi^{(l)}_{vu}\,x_u.
\label{eq:clean-hv}
\end{equation}
On the backdoored graph $\mathcal G_{\mathrm B}$, the representation becomes:
\begin{align}
{h'}^{(l)}_v
&=
\big[(\bar{\mathbf A}+\Delta\bar{\mathbf A})^{\,l}
(\mathbf X+\Delta\mathbf X)\big]_v
\nonumber\\
&=
\sum_{u\in\mathcal V}
\big((\bar{\mathbf A}+\Delta\bar{\mathbf A})^{\,l}\big)_{vu}
\,(x_u+\Delta x_u)
=
\sum_{u\in\mathcal V} {\pi'}^{(l)}_{vu}\,(x_u+\Delta x_u).
\label{eq:poison-hv}
\end{align}
Subtracting~\eqref{eq:clean-hv} from~\eqref{eq:poison-hv} yields:
\begin{align}
\Delta h_v^{(l)}
\triangleq
{h'}^{(l)}_v - h^{(l)}_v
&=
\sum_{u\in\mathcal V}
\big({\pi'}^{(l)}_{vu}-\pi^{(l)}_{vu}\big)\,x_u
+
\sum_{u\in\mathcal V}
{\pi'}^{(l)}_{vu}\,\Delta x_u,
\end{align}
which completes the proof.
\end{proof}

\subsection{Proof of Lemma~\ref{lem:target}}
\label{App:shift}

\begin{lemma}
Consider a targeted graph backdoor attack on node $v$ with target class $y_t$.
The GNN classifier produces logits
$
z_{v,c} = \mathbf w_c^\top h_v^{(l)},
$
where $\mathbf w_c$ is the class-specific weight vector.
The predicted probability for class $c$ is
$
p_{v,c} = \frac{\exp(z_{v,c})}{\sum_{c'} \exp(z_{v,c'})}.
$
Let $\ell_{\mathrm{CE}}(v) = -\log p_{v,y_t}$ be the target-class cross-entropy loss.
For a small representation shift $\Delta h_v^{(l)}$ induced by the trigger,
a necessary condition for decreasing $\ell_{\mathrm{CE}}(v)$ is:
\[
\left\langle
\mathbf w_{y_t} - \bar{\mathbf w}_v,\,
\Delta h_v^{(l)}
\right\rangle > 0,
\]
where
$
\bar{\mathbf w}_v \triangleq \sum_c p_{v,c}\,\mathbf w_c
$
denotes the expected classifier weight under the current predictive distribution of node $v$.
\end{lemma}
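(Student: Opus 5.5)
The plan is a first-order Taylor expansion of $\ell_{\mathrm{CE}}(v)$ in the representation $h_v^{(l)}$, followed by reading off the sign of the linear term. First I would write the loss explicitly in terms of the logits:
\[
\ell_{\mathrm{CE}}(v) = -z_{v,y_t} + \log\sum_{c'}\exp(z_{v,c'}),
\]
and compute its gradient with respect to the logits, $\partial \ell_{\mathrm{CE}}/\partial z_{v,c} = p_{v,c} - \mathbbm{1}[c=y_t]$. Since $z_{v,c}=\mathbf w_c^\top h_v^{(l)}$, the chain rule gives
\[
\nabla_{h_v^{(l)}}\ell_{\mathrm{CE}}(v) = \sum_c \big(p_{v,c}-\mathbbm{1}[c=y_t]\big)\mathbf w_c = \bar{\mathbf w}_v - \mathbf w_{y_t}.
\]
This is the only real computation, and it is routine.

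Next, for a small shift $\Delta h_v^{(l)}$, where the shift is the one produced by the trigger as described in Lemma~\ref{lem:propagation}, I would expand
\[
\ell_{\mathrm{CE}}\big(h_v^{(l)}+\Delta h_v^{(l)}\big) - \ell_{\mathrm{CE}}\big(h_v^{(l)}\big) = -\big\langle \mathbf w_{y_t}-\bar{\mathbf w}_v,\,\Delta h_v^{(l)}\big\rangle + O\big(\|\Delta h_v^{(l)}\|^2\big).
\]
The quadratic remainder is controlled because the log-sum-exp function is smooth. Its Hessian in $z$ is $\mathrm{diag}(p)-pp^\top$, which has operator norm at most $1$. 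After composing with the linear map $\mathbf W$, the Hessian in $h$ is bounded by $\|\mathbf W\|^2$, so the remainder is at most $\tfrac12\|\mathbf W\|^2\|\Delta h_v^{(l)}\|^2$.

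The final step argues by contraposition. Suppose $\langle \mathbf w_{y_t}-\bar{\mathbf w}_v,\Delta h_v^{(l)}\rangle \le 0$. If the inner product is strictly negative, then for all sufficiently small shifts the linear term dominates the remainder and the loss strictly increases. Hence a decrease of $\ell_{\mathrm{CE}}$ under small shifts requires the inner product to be positive, which is the condition in Eq.~\eqref{eq:target}.

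The main obstacle is the boundary case where the inner product is exactly zero. There the first-order term vanishes, and the second-order term, which is positive semidefinite in $z$, decides the outcome. Because the loss is convex in $h$, I would handle this by convexity rather than by expansion: $\ell(h+\Delta)\ge \ell(h)+\langle\nabla\ell,\Delta\rangle = \ell(h)$. So a vanishing inner product can never decrease the loss. This convexity argument in fact gives the necessity claim exactly for every shift, not only small ones, and I would present it as the main argument, with the Taylor expansion as intuition. The result is that a decrease in $\ell_{\mathrm{CE}}(v)$ forces $\langle\nabla\ell,\Delta h_v^{(l)}\rangle<0$, which is equivalent to Eq.~\eqref{eq:target}.
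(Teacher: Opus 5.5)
Your proposal is correct. Its core computation matches the paper's: the paper also derives $\nabla_{h_v^{(l)}}\ell_{\mathrm{CE}}(v)=\bar{\mathbf w}_v-\mathbf w_{y_t}$ by the chain rule. It goes through $\partial\ell_{\mathrm{CE}}/\partial p_{v,k}$ and the softmax Jacobian, whereas you write $\ell_{\mathrm{CE}}=-z_{v,y_t}+\log\sum_{c'}\exp(z_{v,c'})$ directly, which is a cosmetic difference. It then performs the same first-order Taylor expansion.

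You genuinely depart from the paper in the final logical step. The paper simply ``ignores higher-order terms'' and reads off the sign of the linear term. Strictly speaking, that establishes the condition only to first order: it neither controls the $o(\|\Delta h_v^{(l)}\|)$ remainder nor handles the case where the inner product vanishes. Your observation that $h\mapsto -\mathbf w_{y_t}^\top h+\log\sum_c\exp(\mathbf w_c^\top h)$ is convex gives $\ell_{\mathrm{CE}}(h+\Delta)\ge \ell_{\mathrm{CE}}(h)+\langle\bar{\mathbf w}_v-\mathbf w_{y_t},\Delta\rangle$. This turns the heuristic into an exact statement: every shift that decreases the loss satisfies the strict inequality, with no smallness assumption and no boundary case.

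Given this, your Hessian bound $\tfrac12\|\mathbf W\|^2\|\Delta h_v^{(l)}\|^2$ and the case split are superfluous, since the remainder is nonnegative by convexity. They are also slightly imprecise: ``for all sufficiently small shifts'' is not uniform over directions nearly orthogonal to $\mathbf w_{y_t}-\bar{\mathbf w}_v$. Present the convexity inequality alone.

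The Taylor framing is what one would fall back on for a nonlinear classification head, where convexity in $h$ is lost. There it yields only the non-strict inequality, and only for shifts $t\Delta$ with $t\to 0$ along a fixed direction. Under the lemma's linear-head assumption, your argument is strictly stronger than the paper's.
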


\begin{proof}
We derive a first-order necessary condition under which a backdoor trigger can decrease
the target-class loss of node $v$.
Assume a linear classifier on top of the $l$-layer representation $h_v^{(l)}$.
The logit for class $c$ is given by
$
z_{v,c} = \mathbf w_c^\top h_v^{(l)},
$
and the predictive distribution follows the softmax
$
p_{v,c} = \exp(z_{v,c}) / \sum_{c'} \exp(z_{v,c'}).
$
The attack objective is to increase the confidence on the target class $y_t$,
which is equivalent to reducing the cross-entropy loss:
\begin{equation}
\ell_{\mathrm{CE}}(v) = -\log p_{v,y_t}.
\end{equation}
We first compute the gradient of $\ell_{\mathrm{CE}}(v)$ with respect to $h_v^{(l)}$.
By the chain rule, we have:
\begin{equation}
\label{eq:chain}
\frac{\partial \ell_{\mathrm{CE}}(v)}{\partial h_v^{(l)}}
=
\sum_{c}
\frac{\partial \ell_{\mathrm{CE}}(v)}{\partial z_{v,c}}
\frac{\partial z_{v,c}}{\partial h_v^{(l)}}.
\end{equation}
Since $\ell_{\mathrm{CE}}(v)=-\log p_{v,y_t}$, it follows that
$
\frac{\partial \ell_{\mathrm{CE}}(v)}{\partial p_{v,k}}
= -\frac{1}{p_{v,y_t}}\mathbbm{1}[k=y_t].
$
Moreover, the Jacobian of the softmax satisfies
$
\frac{\partial p_{v,k}}{\partial z_{v,c}}
= p_{v,k}\big(\mathbbm{1}[k=c]-p_{v,c}\big).
$
Combining the above relations yields:
\begin{equation}
\frac{\partial \ell_{\mathrm{CE}}(v)}{\partial z_{v,c}}
=
\sum_{k}
\frac{\partial \ell_{\mathrm{CE}}(v)}{\partial p_{v,k}}
\frac{\partial p_{v,k}}{\partial z_{v,c}}
=
p_{v,c}-\mathbbm{1}[c=y_t].
\end{equation}
Noting that $\frac{\partial z_{v,c}}{\partial h_v^{(l)}}=\mathbf w_c$,
substituting this result into Eq.~\eqref{eq:chain} gives:
\begin{equation}
\frac{\partial \ell_{\mathrm{CE}}(v)}{\partial h_v^{(l)}}
=
\sum_c \big(p_{v,c}-\mathbbm{1}[c=y_t]\big)\mathbf w_c
=
\sum_c p_{v,c}\mathbf w_c - \mathbf w_{y_t}
=
\bar{\mathbf w}_v-\mathbf w_{y_t},
\end{equation}
where $\bar{\mathbf w}_v \triangleq \sum_c p_{v,c}\mathbf w_c$.
Let $\Delta h_v^{(l)}$ denote the representation shift induced by the trigger, and define:
\begin{equation}
\Delta \ell_{\mathrm{CE}}(v)
\triangleq
\ell_{\mathrm{CE}}\!\left(h_v^{(l)}+\Delta h_v^{(l)}\right)
-
\ell_{\mathrm{CE}}\!\left(h_v^{(l)}\right).
\end{equation}
By the first-order Taylor expansion of $\ell_{\mathrm{CE}}(\cdot)$ around $h_v^{(l)}$, we obtain:
\begin{equation}
\ell_{\mathrm{CE}}\!\left(h_v^{(l)}+\Delta h_v^{(l)}\right)
=
\ell_{\mathrm{CE}}\!\left(h_v^{(l)}\right)
+
\left\langle
\nabla_{h_v^{(l)}}\ell_{\mathrm{CE}}(v),\,
\Delta h_v^{(l)}
\right\rangle
+
o\!\left(\|\Delta h_v^{(l)}\|\right),
\end{equation}
and therefore:
\begin{equation}
\Delta \ell_{\mathrm{CE}}(v)
=
\left\langle
\nabla_{h_v^{(l)}}\ell_{\mathrm{CE}}(v),\,
\Delta h_v^{(l)}
\right\rangle
+
o\!\left(\|\Delta h_v^{(l)}\|\right)
\approx
\left\langle
\bar{\mathbf w}_v-\mathbf w_{y_t},\,
\Delta h_v^{(l)}
\right\rangle.
\end{equation}
For an effective targeted attack, the target-class loss must decrease, i.e.,
$\Delta \ell_{\mathrm{CE}}(v)<0$.
Ignoring higher-order terms then yields the first-order necessary condition:
\begin{equation}
\left\langle
\mathbf w_{y_t}-\bar{\mathbf w}_v,\,
\Delta h_v^{(l)}
\right\rangle
>0,
\end{equation}
which completes the proof.
\end{proof}

\subsection{Proof of Theorem~\ref{the:homo-drop}}
\label{App:homo-drop}
\noindent\textbf{Setup.}
For clarity, we instantiate $\mathrm{AGGR}$ as mean aggregation and $\mathrm{sim}(\cdot,\cdot)$ as the inner product.
For each node $v$, define its neighborhood mean feature
$
m_v \triangleq \frac{1}{|\mathcal N(v)|}\sum_{u\in\mathcal N(v)} x_u,
$
and the feature-based homophily as
$
\mathcal H_{\mathrm{feat}}(v)=\langle x_v,\ m_v\rangle.
$

\begin{theorem}
\label{the:homo-drop}

Given a backdoored graph $\mathcal G_{\mathrm B}$ with clean node set
$\mathcal V_{\mathrm C}$ and backdoored node set $\mathcal V_{\mathrm B}$.
For any effective graph backdoor attack, the expected feature-based homophily of backdoored nodes is lower than that of clean nodes:
\[
\mathbb E_{v\sim \mathcal V_{\mathrm B}}
\!\left[\mathcal H_{\mathrm{feat}}(v)\right]
\;<\;
\mathbb E_{v\sim \mathcal V_{\mathrm C}}
\!\left[\mathcal H_{\mathrm{feat}}(v)\right].
\]
\end{theorem}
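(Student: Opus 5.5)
We work in the appendix setup: mean aggregation and inner-product similarity, so $\mathcal H_{\mathrm{feat}}(v)=\langle x_v, m_v\rangle$. A backdoored node is then modeled as a clean node plus an effective perturbation. In the feature view, write $x'_v=x_v+\Delta x_v$ and $m'_v=m_v+\Delta m_v$. Here $\Delta m_v$ collects both the shift from altered neighbor sets (trigger nodes attached under $\Delta\mathbf A$) and any neighbor feature perturbations. The plan is to compare $\mathbb E_{\mathcal V_{\mathrm B}}[\langle x'_v,m'_v\rangle]$ with the clean baseline. We assume the target nodes are sampled like clean nodes before poisoning, so that $\mathbb E_{\mathcal V_{\mathrm B}}[\langle x_v,m_v\rangle]=\mathbb E_{\mathcal V_{\mathrm C}}[\mathcal H_{\mathrm{feat}}(v)]$. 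Under this assumption it suffices to show the expected increment is negative:
\[
\mathbb E_{v\sim\mathcal V_{\mathrm B}}\big[\langle \Delta x_v, m_v\rangle+\langle x_v,\Delta m_v\rangle+\langle \Delta x_v,\Delta m_v\rangle\big]<0 .
\]

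Next we use Lemma~\ref{lem:propagation} with $l=1$ and mean aggregation. This gives $\Delta h_v$ as a combination of $\Delta x_v$ (self-loop term) and $\Delta m_v$ (reweighted and new neighbors, plus $\Delta x_u$). Lemma~\ref{lem:target} then says an effective attack must make $\Delta h_v$ align positively with $\mathbf d_v\triangleq \mathbf w_{y_t}-\bar{\mathbf w}_v$. Here $y_t$ differs from the true class of a typical poisoned node $v$. We therefore decompose every perturbation along $\mathbf d_v$ and its orthogonal complement.

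The key modeling step relates $x_v$ and $m_v$ to $\mathbf d_v$. For a clean node of true class $y_v\neq y_t$, a well-trained classifier predicts $y_v$, so $\bar{\mathbf w}_v\approx \mathbf w_{y_v}$. Under homophily, both $x_v$ and $m_v$ lie in the $y_v$ class region, so $\langle x_v,\mathbf d_v\rangle\le 0$ and $\langle m_v,\mathbf d_v\rangle\le 0$. The perturbations carry a component $\gamma\,\mathbf d_v/\|\mathbf d_v\|$ with $\gamma>0$, whose size is forced by effectiveness. These target-direction components pair against the clean vectors to give the negative cross terms $\langle\Delta x_v,m_v\rangle$ and $\langle x_v,\Delta m_v\rangle$. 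The orthogonal components are treated as zero-mean noise independent of the clean features, so they vanish in expectation.

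We handle the two attack paradigms as cases. For feature-based attacks, $\Delta m_v\approx 0$ because neighbors are untouched, so only $\langle\Delta x_v,m_v\rangle<0$ remains. For subgraph-based attacks, $\Delta x_v=0$ and the trigger features dilute $m_v$ toward the target direction, so $\langle x_v,\Delta m_v\rangle<0$. The main obstacle is the mixed term $\langle\Delta x_v,\Delta m_v\rangle$, which is positive when both perturbations point along $\mathbf d_v$. I would first bound it by the small-perturbation regime already used in Lemma~\ref{lem:target}, treating it as second order, and argue that it is dominated by the first-order cross terms. Failing that, I would state a mild explicit condition, for example $\|\Delta x_v\|\,\|\Delta m_v\|$ smaller than the class-margin quantities $|\langle m_v,\mathbf d_v\rangle|\,\|\Delta x_v\|/\|\mathbf d_v\|$, as part of what "effective" attack means.
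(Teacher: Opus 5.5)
Your overall route matches the paper's. You use mean aggregation with inner product, the direction $\mathbf d_v=\mathbf w_{y_t}-\bar{\mathbf w}_v$ from Lemma~\ref{lem:target}, a component along it plus a remainder assumed not to help, and a feature-only versus structure-only case split. Your Case~I matches the paper's Case~I.

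The genuine gap is Case~II. There you try to \emph{derive} what the paper simply assumes, namely its margin $\gamma'$ in $\mathbb E[\langle x_v,m_v\rangle]\le\mathbb E[\langle x_v,m_v^{\mathrm{clean}}\rangle]-\gamma'$, and the derivation does not go through. Lemma~\ref{lem:target} constrains $\Delta h_v$, not $\Delta m_v$. Attaching $k$ trigger neighbors to a degree-$d$ node also shrinks the self-loop weight on $x_v$, and your split of $\Delta h_v$ into a $\Delta x_v$ part and a $\Delta m_v$ part omits this term.

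Take mean aggregation over $\mathcal N(v)\cup\{v\}$, $\hat{\mathbf d}_v=\mathbf d_v/\|\mathbf d_v\|$, $a=\langle\hat{\mathbf d}_v,x_v\rangle$, $b=\langle\hat{\mathbf d}_v,m_v\rangle$ and $t=\langle\hat{\mathbf d}_v,\bar x_{\mathrm{tri}}\rangle$. Effectiveness then only forces $t>(a+db)/(d+1)$. The $\hat{\mathbf d}_v$-component of $\Delta m_v=\tfrac{k}{d+k}(\bar x_{\mathrm{tri}}-m_v)$ is $\tfrac{k}{d+k}(t-b)$, which is negative when $a<b$ and $t$ lies between these two values. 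In that case the target-direction coefficient of $\Delta m_v$ is negative, its pairing with $\langle x_v,\hat{\mathbf d}_v\rangle\le 0$ is nonnegative, and your sign argument flips.

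Concretely, take $\hat{\mathbf d}_v=e_1$, $x_v=(-2,0)$, $m_v=(-1,0)$, $d=k=1$ and $\bar x_{\mathrm{tri}}=(-1.2,0)$. Both clean vectors are on the $y_v$ side and the orthogonal parts are zero, so all your hypotheses hold. Then $\langle\hat{\mathbf d}_v,\Delta h_v\rangle=0.1>0$, yet the homophily rises from $2$ to $2.2$.

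Separately, the orthogonal part $\tfrac{k}{d+k}(\bar x_{\mathrm{tri}}^{\perp}-m_v^{\perp})$ is not independent zero-mean noise. The sign is really decided by $\langle x_v,\bar x_{\mathrm{tri}}\rangle$ versus $\langle x_v,m_v\rangle$, and similarity-constrained triggers such as UGBA deliberately raise the former. You therefore need an explicit hypothesis here: either the paper's $\gamma'$ margin, or something like $\mathbb E_{v\sim\mathcal V_{\mathrm B}}[\langle x_v,\bar x_{\mathrm{tri}}(v)\rangle]<\mathbb E_{v\sim\mathcal V_{\mathrm B}}[\langle x_v,m_v\rangle]$.

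There are also three smaller points.
\begin{enumerate}
\item From $\langle m_v,\mathbf d_v\rangle\le 0$ you only get a non-strict inequality. The paper posits $\mathbb E_{v\sim\mathcal V_{\mathrm B}}[\langle\mathbf g_v,m_v\rangle]\le-\gamma$ with $\gamma>0$ precisely to make the drop strict, and you need the same.
\item The mixed term $\langle\Delta x_v,\Delta m_v\rangle$ that you call the main obstacle vanishes in both pure cases ($\Delta m_v\approx 0$ in Case~I, $\Delta x_v=0$ in Case~II). It matters only for hybrid attacks, which the paper does not treat either. The real difficulty is Case~II above.
\item Your explicit assumption that targets are typical before poisoning is an improvement over the paper. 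The paper's clean-majority regularity $\mathbb E_{v\sim\mathcal V_{\mathrm C}}[\mathcal H_{\mathrm{feat}}(v)]\ge\mu$ does not by itself tie the pre-attack homophily of $\mathcal V_{\mathrm B}$ to that of $\mathcal V_{\mathrm C}$. Your assumption supplies exactly that missing link.
\end{enumerate}
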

\begin{proof}
To prove that effective graph backdoor attacks necessarily introduce a systematic discrepancy in feature-based homophily between clean and backdoored nodes, 
We state two mild regularity assumptions that capture the typical behavior of clean nodes and the geometric requirement of targeted attacks.

\noindent\textbf{Key assumption (clean-majority regularity).} Clean nodes constitute the majority and follow a relatively stable local pattern, i.e.,
\begin{equation}
\mathbb E_{v\sim \mathcal V_\mathrm{C}}\!\big[\mathcal H_{\mathrm{feat}}(v)\big]
\ \ge\ \mu,
\label{eq:clean-regularity}
\end{equation}
where $\mu$ denotes the mean feature-based homophily over the whole graph.

\noindent\textbf{Key assumption (target direction is locally misaligned).}
Next, recall from Lemma~\ref{lem:target} that an effective targeted attack requires the induced
representation shift to align with the target-aware direction
\begin{equation}
\mathbf g_v \triangleq \mathbf w_{y_t}-\bar{\mathbf w}_v,
\qquad
\left\langle \mathbf g_v,\ \Delta h_v^{(l)} \right\rangle > 0.
\label{eq:target-align}
\end{equation}
For poisoned nodes, pushing representations along $\mathbf g_v$ is generally not aligned with
their original neighborhood semantics; we assume this misalignment holds in expectation:
\begin{equation}
\mathbb E_{v\sim \mathcal V_\mathrm{B}}\!\big[\langle \mathbf g_v,\ m_v\rangle\big]
\ \le\ -\gamma,
\qquad \gamma>0.
\label{eq:misalign}
\end{equation}

We now analyze how an effective attack that satisfies~\eqref{eq:target-align} affects
$\mathcal H_{\mathrm{feat}}(v)$.

\myparagraph{Case I: feature-based GBA ($\Delta\bar{\mathbf A}=\mathbf 0$).}
Feature-based attacks modify node attributes while keeping the neighborhood structure unchanged.
In this case, Lemma~\ref{lem:propagation} reduces to:
\begin{equation}
\Delta h_v^{(l)}
=
\sum_{u\in\mathcal V} \pi^{(l)}_{vu}\,\Delta x_u,
\qquad
\pi^{(l)}_{vu}=(\bar{\mathbf A}^{\,l})_{vu},
\label{eq:feat-only-prop}
\end{equation}
since $\pi'^{(l)}_{vu}=\pi^{(l)}_{vu}$.
An effective attack requires $\langle \mathbf g_v,\Delta h_v^{(l)}\rangle>0$; a canonical way
to satisfy this condition is to inject a perturbation component that is positively correlated with
$\mathbf g_v$ (e.g., concentrating $\Delta x_u$ on $u=v$ and choosing $\Delta x_v$ to have
$\langle \mathbf g_v,\Delta x_v\rangle>0$).
Under mean aggregation, the neighborhood mean $m_v$ is unchanged by feature-only attacks, and thus:
\begin{equation}
\mathcal H_{\mathrm{feat}}(v)
=
\langle x_v+\Delta x_v,\ m_v\rangle
=
\langle x_v,\ m_v\rangle
+
\langle \Delta x_v,\ m_v\rangle.
\label{eq:homo-feat-case1}
\end{equation}
For effective targeted perturbations, $\Delta x_v$ necessarily contains a component aligned with
$\mathbf g_v$, i.e., $\Delta x_v=\lambda \mathbf g_v + r_v$ with $\lambda>0$ and
$\langle \mathbf g_v, r_v\rangle=0$.
Substituting into~\eqref{eq:homo-feat-case1} gives:
\begin{equation}
\mathcal H_{\mathrm{feat}}(v)
=
\langle x_v,\ m_v\rangle
+
\lambda \langle \mathbf g_v,\ m_v\rangle
+
\langle r_v,\ m_v\rangle.
\end{equation}
Taking expectation over $v\sim\mathcal V_{\mathrm B}$ and using~\eqref{eq:misalign} yields:
\begin{equation}
\mathbb E_{v\sim\mathcal V_{\mathrm B}}\!\big[\mathcal H_{\mathrm{feat}}(v)\big]
\ \le\
\mathbb E_{v\sim\mathcal V_{\mathrm B}}\!\big[\langle x_v,\ m_v\rangle\big]
-\lambda\gamma
+
\mathbb E_{v\sim\mathcal V_{\mathrm B}}\!\big[\langle r_v,\ m_v\rangle\big].
\label{eq:case1-bound}
\end{equation}
Since $r_v$ does not contribute to satisfying the attack condition~\eqref{eq:target-align}, a rational adversary that prioritizes attack effectiveness has no incentive to choose $r_v$ to systematically increase $\langle r_v,m_v\rangle$. Accordingly, in the worst case we assume $\mathbb E[\langle r_v,m_v\rangle]\le 0$, which implies a strict drop by $\lambda\gamma$ in expectation.

\myparagraph{Case II: subgraph-based GBA ($\Delta\mathbf X=\mathbf 0$).}
Subgraph-based attacks alter the neighborhood structure of $v$ while keeping $x_v$ unchanged.
In this case, Lemma~\ref{lem:propagation} reduces to the structure-driven term:
\begin{equation}
\Delta h_v^{(l)}
=
\sum_{u\in\mathcal V}
\big(\pi'^{(l)}_{vu}-\pi^{(l)}_{vu}\big)\,x_u,
\qquad
\pi'^{(l)}_{vu}=\big((\bar{\mathbf A}+\Delta\bar{\mathbf A})^{\,l}\big)_{vu},
\label{eq:struct-only-prop}
\end{equation}
which shows that subgraph-based attacks act by reweighting the aggregation toward selected nodes.
Let $m_v$ denote the post-attack neighborhood mean under mean aggregation. Since the attack is
effective, it must satisfy~\eqref{eq:target-align}, which in this case is achieved by shifting
aggregation weights so that $\Delta h_v^{(l)}$ has positive correlation with $\mathbf g_v$.
Such reweighting inevitably drifts the neighborhood mean away from the original neighborhood
semantics of $v$ and decreases $\langle x_v,m_v\rangle$ in expectation; we assume that this effect
holds with a non-trivial margin:
\begin{equation}
\mathbb E_{v\sim\mathcal V_{\mathrm B}}\!\big[\langle x_v,\ m_v\rangle\big]
\ \le\
\mathbb E_{v\sim\mathcal V_{\mathrm B}}\!\big[\langle x_v,\ m_v^{\mathrm{clean}}\rangle\big]
-\gamma',
\qquad \gamma'>0,
\label{eq:case2-assump}
\end{equation}
where $m_v^{\mathrm{clean}}$ denotes the neighborhood mean before injecting trigger edges/subgraphs.
Therefore, subgraph-based attacks also induce a strict decrease in feature-based homophily on
poisoned nodes in expectation.

Combining both cases, effective GBAs reduce $\mathbb E_{v\sim\mathcal V_{\mathrm B}}[\mathcal H_{\mathrm{feat}}(v)]$
by a non-trivial margin, whereas clean nodes preserve a stable homophily level
as in~\eqref{eq:clean-regularity}. Consequently,
\[
\mathbb E_{v\sim \mathcal V_{\mathrm B}}
\!\left[\mathcal H_{\mathrm{feat}}(v)\right]
\;<\;
\mathbb E_{v\sim \mathcal V_{\mathrm C}}
\!\left[\mathcal H_{\mathrm{feat}}(v)\right],
\]
which proves Theorem~\ref{the:homo-drop}.
\end{proof}

\subsection{Feature-based Homophily on Trigger Nodes}
\label{app:trigger-homo}

In this section, we further discuss why trigger nodes also tend to exhibit a feature-based homophily gap on the backdoored graph.
Our goal is to explain a common phenomenon observed across diverse subgraph-based GBAs:
\begin{equation}
\label{eq:trigger-gap-goal}
\mathbb{E}_{v\sim \mathcal V_{\mathrm{tri}}}\!\left[\mathcal H_{\mathrm{feat}}(v)\right]
\;<\;
\mathbb{E}_{v\sim \mathcal V_{\mathrm C}}\!\left[\mathcal H_{\mathrm{feat}}(v)\right],
\end{equation}
where $\mathcal{V}_{\mathrm{tri}}$ denotes the trigger-node set.
We first categorize subgraph-based trigger generators according to whether they impose
an explicit feature-similarity constraint on trigger nodes.
Specifically, we consider:
(i) unconstrained triggers, such as GTA~\cite{gta2021} and DPGBA~\cite{dpgba2024}, where trigger nodes are generated primarily for attack activation
without explicitly enforcing consistency between a trigger node and its post-attachment neighborhood; and
(ii)similarity-constrained triggers, such as UGBA~\cite{ugba2023}, where trigger nodes are encouraged to resemble the attached
poisoned target node in feature space.
We show that unconstrained generators typically induce a pronounced homophily gap on trigger nodes,
while similarity constraints can narrow but generally do not eliminate the gap due to the mixed neighborhood
context after attachment.

\myparagraph{Setup.}
Following Sec.~\ref{App:homo-drop}, we instantiate $\mathrm{AGGR}$ as mean aggregation and
$\mathrm{sim}(\cdot,\cdot)$ as the inner product.
For each node $v$, define the neighborhood mean
$
m_v \triangleq \frac{1}{|\mathcal N(v)|}\sum_{u\in\mathcal N(v)} x_u
$
and $\mathcal H_{\mathrm{feat}}(v)=\langle x_v,m_v\rangle$.
For a trigger node $v_t\in\mathcal V_{\mathrm{tri}}$, its post-attachment neighborhood typically contains
a mixture of (i) poisoned targets that the trigger connects to and (ii) other trigger nodes inside the injected
subgraph.
Accordingly, we decompose the post-attachment neighborhood mean as:
\begin{equation}
\label{eq:mt-decompose}
m_{v_t}
=
\alpha_t\,\bar x_{\mathrm{tar}}(v_t)
+
\alpha_r\,\bar x_{\mathrm{tri}}(v_t)
+
\alpha_o\,\bar x_{\mathrm{oth}}(v_t),
\qquad
\alpha_t+\alpha_r+\alpha_o=1,
\end{equation}
where $\bar x_{\mathrm{tar}}(v_t)$, $\bar x_{\mathrm{tri}}(v_t)$, and $\bar x_{\mathrm{oth}}(v_t)$ are the mean
features of target neighbors, trigger neighbors, and other neighbors of $v_t$, respectively, and
$\alpha_{\{\cdot\}}$ are the corresponding neighborhood proportions.
Then
\begin{equation}
\label{eq:hvt-expand}
\mathcal H_{\mathrm{feat}}(v_t)
=
\alpha_t\langle x_{v_t},\bar x_{\mathrm{tar}}(v_t)\rangle
+
\alpha_r\langle x_{v_t},\bar x_{\mathrm{tri}}(v_t)\rangle
+
\alpha_o\langle x_{v_t},\bar x_{\mathrm{oth}}(v_t)\rangle .
\end{equation}
Eq.~\eqref{eq:hvt-expand} highlights that improving the similarity between a trigger node and the attached
poisoned target only directly affects the first term, while the overall homophily also depends on how
$x_{v_t}$ aligns with the remaining mixed neighborhood context.

\myparagraph{Unconstrained trigger generators.}
For unconstrained triggers, the trigger generator does not explicitly enforce that each trigger node remains consistent with its post-attachment neighborhood mean $m_{v_t}$.
In such cases, the neighborhood of a trigger node is often dominated by other trigger nodes (i.e., $\alpha_r$ is large) and only weakly anchored to a small set of attached poisoned targets (small $\alpha_t$).
Since the injected subgraph is engineered for attack activation rather than local semantic regularity, trigger nodes are not explicitly optimized to align with their surrounding trigger-induced context.
Accordingly, we assume that the average feature consistency among trigger nodes inside the injected subgraph is bounded by a constant:
\begin{equation}
\label{eq:unconstrained-mismatch}
\mathbb E_{v_t\sim\mathcal V_{\mathrm{tri}}}
\!\big[\langle x_{v_t},\bar x_{\mathrm{tri}}(v_t)\rangle\big]
\ \le\ \rho,
\end{equation}
where $\rho$ is generally lower than the homophily level observed on clean neighborhoods.
Combined with the mixed neighborhood composition in Eq.~\eqref{eq:hvt-expand}, this bound implies that the overall feature-based homophily $\mathcal H_{\mathrm{feat}}(v_t)$ of trigger nodes is reduced in expectation, leading to a pronounced homophily gap as in Eq.~\eqref{eq:trigger-gap-goal}.


\myparagraph{Similarity-constrained trigger generators.}
Some attacks additionally enforce a feature-similarity constraint between each trigger node $v_t$ and its attached target $v_p$, which can be written as:
\begin{equation}
\label{eq:sim-constraint-abstract}
\mathcal L_{\mathrm{sim}}
=
\sum_{(v_t,v_p)\in\mathcal E_{\mathrm{attach}}}
\ell_{\mathrm{sim}}\!\left(\langle x_{v_t},x_{v_p}\rangle\right),
\end{equation}
where $\ell_{\mathrm{sim}}(\cdot)$ is a monotonically decreasing penalty (e.g., a hinge loss).
Minimizing $\mathcal L_{\mathrm{sim}}$ directly increases $\langle x_{v_t},x_{v_p}\rangle$ and thus improves the first term in Eq.~\eqref{eq:hvt-expand}.
This effect explains why similarity-constrained triggers are typically more ``stealthy'' at the attachment point and can narrow the trigger-node homophily gap.

However, Eq.~\eqref{eq:sim-constraint-abstract} does not directly control the remaining terms in Eq.~\eqref{eq:hvt-expand}.
In particular, when $\alpha_t$ is small (i.e., the neighborhood of $v_t$ is dominated by other trigger nodes), improving $\langle x_{v_t},x_{v_p}\rangle$ can only increase $\mathcal H_{\mathrm{feat}}(v_t)$ proportionally to $\alpha_t$.
Under feature normalization (so that $|\langle x_{v_t},\cdot\rangle|\le 1$), we have the crude bound:
\begin{equation}
\label{eq:alpha-limit}
\mathcal H_{\mathrm{feat}}(v_t)
\le
\alpha_t
+
(1-\alpha_t)\cdot
\max\!\Big\{
\langle x_{v_t},\bar x_{\mathrm{tri}}(v_t)\rangle,\ 
\langle x_{v_t},\bar x_{\mathrm{oth}}(v_t)\rangle
\Big\},
\end{equation}
which makes explicit that the similarity constraint can narrow but may not eliminate the gap unless the generator also enforces
node--neighborhood consistency with respect to the entire mixed context (i.e., all terms in Eq.~\eqref{eq:hvt-expand}).
This reveals an inherent trade-off: increasing stealthiness by making trigger nodes similar to attached targets does not guarantee high homophily once the trigger is embedded into a mixed trigger--target neighborhood optimized for attack activation.

\myparagraph{Empirical validation.}
Table~\ref{tab:label-homo} reports the average feature-based homophily of clean nodes, poisoned targets, and trigger nodes computed on the backdoored graph.
Across datasets, we observe that trigger nodes produced by unconstrained generators (i.e., GTA~\cite{gta2021} and DPGBA~\cite{dpgba2024}) exhibit a clear drop in $\mathcal H_{\mathrm{feat}}$ compared with clean nodes, consistent with Eq.~\eqref{eq:trigger-gap-goal}.
Similarity-constrained generators (i.e., UGBA~\cite{ugba2023}) yield trigger nodes whose homophily is closer to that of clean nodes, indicating that the constraint in Eq.~\eqref{eq:sim-constraint-abstract} indeed narrows the gap; nevertheless, trigger-node homophily remains consistently below the clean baseline.
Overall, these results support our insight that feature-based homophily provides a stable signal for exposing trigger nodes under diverse subgraph-based GBAs. 

\section{Algorithm and Complexity}
\label{app:algorithem-complexity}

\subsection{Training Algorithm}
\label{app:algorithm}

Algorithm~\ref{alg:CoGBD} summarizes the training procedure of
CoGBD, which consists of two stages.
\textbf{Stage I: Detection of Consistency Deviation (lines 1--11).}
We first pretrain a consistency reconstruction model on the backdoored
training graph $\mathcal{G}_{\mathrm B}$ (lines 2--5).
Specifically, the encoder and decoders are optimized by minimizing the
tri-signal reconstruction objective in Eq.~\eqref{eq:overall-loss},
which captures graph consistency from node-level, neighborhood-level, and feature-based homophily.
After convergence, we compute the reconstruction error $e_i$ for each node
and select the top $\rho\%$ nodes as abnormal. 
These nodes are further partitioned into suspicious trigger nodes
(unlabeled) and suspicious target nodes (labeled).
Edges incident to suspicious trigger nodes are removed from the training graph,
while suspicious target nodes are retained for soft suppression in training.
\textbf{Stage II: Noise-aware Robust Training (lines 12--16).}
We then initialize a $L$-layer GNN classifier and train it using the
noise-aware robust objective in Eq.~\eqref{eq:robust}.
This objective down-weights supervision from suspicious target nodes while
preserving reliable signals from clean nodes, thereby mitigating the impact
of detection noise.
The final output is a GNN classifier robust to both subgraph-based and
feature-based graph backdoor attacks.



\begin{algorithm}[t]
\caption{Algorithm of CoGBD}
\label{alg:CoGBD}
\SetAlgoLined
\LinesNumbered
\KwIn{
Backdoored training graph $\mathcal{G}_{\mathrm{B}}=(\mathcal{V}'_{\mathrm{T}},\mathcal{E}'_{\mathrm{T}},\mathbf{X}'_{\mathrm{T}})$;
suspicious ratio $\rho$;
loss weights $\alpha,\beta$;
temperature parameter $\tau$;
suppression strength $\lambda$.
}
\KwOut{Backdoor-robust GNN classifier $f_g$.}

\BlankLine
\textbf{Stage I: Identifying Backdoors}\\
Initialize parameters $\theta_{\mathrm{rec}}$ of the consistency reconstruction model
(encoder $f_{\mathrm{enc}}$ and decoders $f_{\mathrm{dec}}^{x}, f_{\mathrm{dec}}^{m}$)\;
\While{not converged}{
Update $\theta_{\mathrm{rec}}$ by minimizing the overall reconstruction objective
$\mathcal{L}_{\mathrm{rec}}$ in Eq.~\eqref{eq:overall-loss} on $\mathcal{G}_{\mathrm{B}}$\;
}
\ForEach{$v_i \in \mathcal{V}'_{\mathrm{T}}$}{
Compute reconstruction error $e_i$ using Eq.~\eqref{eq:stage2-score}\;
}
Select the top $\rho\%$ nodes with largest $e_i$ as the backdoors set \;
Partition the backdoors set into:
suspicious trigger nodes (unlabeled) and
suspicious target nodes $\mathcal{V}_{\mathrm{S}}$ (labeled)\;
Remove edges incident to suspicious trigger nodes from $\mathcal{G}_{\mathrm{B}}$\;

\BlankLine
\textbf{Stage II: Noise-aware Robust Training}\\
Initialize parameters $\theta_g$ of an $l$-layer GNN classifier $f_g$\;
\While{not converged}{
Update $\theta_g$ by minimizing the noise-aware robust objective
$\mathcal{L}_{\mathrm{noise}}$ in Eq.~\eqref{eq:robust};
}
\Return Backdoor-robust GNN classifier $f_g$\;
\end{algorithm}

\subsection{Time Complexity Analysis}\label{app:complexity}
We analyze the time complexity of CoGBD and compare the pretraining process of the consistency reconstruction model with
DOMINANT~\cite{dominant2019}, a representative reconstruction-based anomaly detector for the attributed networks.
The analysis focuses on the computational cost of a single forward pass.
Specifically, $N$ and $E$ denote the number of nodes and edges, respectively. $F$ denotes the input feature dimension. $d$ denotes the hidden dimension.

\myparagraph{Consistency Reconstruction Model Pretraining.}
The model adopts an auto-encoder architecture with a two-layer GCN graph encoder and two lightweight MLP decoders with two layers. 
The encoding cost is $\mathcal{O}(E \cdot d (F + d))$.
Both decoders operate in a node-wise manner, resulting in a cost of
$\mathcal{O}(N \cdot d \cdot F)$, while the neighborhood branch additionally
requires one mean pooling operation with cost $\mathcal{O}(E \cdot d)$.
The three reconstruction objectives further introduce
$\mathcal{O}(E \cdot F + N \cdot F)$ element-wise operations.
As a result, the overall time complexity of the consistency reconstruction model per forward pass is
$
\mathcal{O}\big(
E \cdot d (F + d)
\;+\;
N \cdot d \cdot F
\big),
$
which scales linearly with the number of nodes and edges. 

\noindent\underline{\textit{Comparison with DOMINANT.}}
DOMINANT also follows an auto-encoder architecture.
However, it explicitly reconstructs the adjacency matrix and optimizes $\|\hat{\mathbf A}-\mathbf A\|_F^2$,
which incurs a dominant cost of $\mathcal{O}(N^2 \cdot d)$.
As a result, its overall time complexity per forward pass is
$\mathcal{O}(E \cdot d (F + d) + N^2 \cdot d)$,
which scales quadratically with the number of nodes.
In contrast, CoGBD avoids explicit adjacency reconstruction, making it more scalable for large graphs while retaining effective anomaly detection capability.

\myparagraph{Detection.}
Given the pretrained reconstruction model, computing reconstruction errors in Eq.~\eqref{eq:stage2-score} involves only element-wise operations over node features
and neighborhood representations, costing $\mathcal{O}(N\cdot F)$.
Selecting the top-$\rho\%$ backdoors can be implemented by sorting, costing
$\mathcal{O}(N\log N)$.
Removing edges incident to suspicious trigger nodes requires scanning edges once,
costing $\mathcal{O}(E)$.
Thus, the identification step costs
$
\label{eq:complexity-id}
\mathcal{O}\Big(NF\;+\;N\log N\;+\;E\Big).
$

\myparagraph{Noise-aware Robust Training.}
An $L$-layer GNN classifier incurs per-forward cost
$\mathcal{O}\big(L\cdot E\cdot d^2 + L\cdot N\cdot d^2\big)$ in general.
For the propagation in GCN, this is commonly dominated by sparse message passing
$\mathcal{O}(L\cdot E\cdot d^2)$.
The noise-aware reweighting in Eq.~\eqref{eq:robust} is node-wise and costs
$\mathcal{O}(N)$, which is negligible compared to message passing.
Therefore, robust training has per-forward complexity
$
\label{eq:complexity-robust}
\mathcal{O}\Big(L\cdot E\cdot d^2\Big).
$


\myparagraph{Overall Complexity.}
Combining consistency reconstruction model pretraining, detection, and noise-aware robust training, the overall complexity of CoGBD is
$
\mathcal{O}\Big(
E\cdot d(F+d)
\;+\;
N\cdot dF
\;+\;
NF
\;+\;
N\log N
\;+\;
E
\;+\;
L\cdot E\cdot d^2
\Big).
$
In typical settings where $F=\mathcal{O}(d)$, the overall cost is dominated by sparse
message passing in the reconstruction encoder and the robust GNN classifier, i.e.,
$\mathcal{O}(E\cdot d^2 + L\cdot E\cdot d^2)$, which scales linearly with $E$.

\section{Additional Experimental Setups}

\subsection{Attack Methods.}
\label{app:attacks}
We consider three representative subgraph-based GBAs (GTA~\cite{gta2021}, UGBA~\cite{ugba2023}, and DPGBA~\cite{dpgba2024}) and one
recent feature-based GBA (SPEAR~\cite{spear2025}).
A brief overview of each attack is provided below.

\begin{enumerate}[label=\arabic*.]
    \item \textbf{GTA.}
    GTA is an early graph backdoor attack that introduces adaptive, sample-specific subgraph triggers via a trigger generator optimized to minimize the backdoor attack loss.
    \item \textbf{UGBA.}
    UGBA improves attack efficiency by selecting representative target nodes through
    clustering.
    It further employs a similarity-constrained trigger generator that enforces feature similarity between trigger nodes and their attached target nodes, enhancing attack stealthiness.
    \item \textbf{DPGBA.}
    DPGBA advances subgraph-based attacks by generating in-distribution triggers via adversarial learning, making trigger nodes harder to distinguish from clean ones.
    \item \textbf{SPEAR.}
    SPEAR first identifies critical feature dimensions via a global importance-driven selection strategy, and then injects crafted feature-level triggers to maximize
    the attack success rate while preserving the original graph topology.
\end{enumerate}

\subsection{Defense Methods.}\label{app:defense}
We select three representative defense methods that are specifically designed
for graph backdoor attacks: Prune~\cite{ugba2023}, Outlier Detection (OD)~\cite{dpgba2024},
and RIGBD~\cite{rigbd2024}.
\begin{enumerate}[label=\arabic*.]
    \item \textbf{Prune.}
    Prune removes edges that connect low-similarity node pairs, based on the
    assumption that such edges are more likely to be introduced by a subgraph
    triggers.

    \item \textbf{OD.}
    OD employs a commonly used outlier detector, DOMINANT~\cite{dominant2019},
    to identify out-of-distribution nodes and removes the edges associated
    with detected anomalies.

    \item \textbf{RIGBD.}
    RIGBD first identifies poisoned target nodes by computing prediction
    variance over $K$ inference runs.
    It then estimates the target label and suppresses the confidence of
    suspicious nodes toward the predicted target class to mitigate the
    backdoor effect.
\end{enumerate}

Following~\cite{rigbd2024}, we also include a strong baseline that aims to learn
a clean model directly from backdoor-poisoned data: ABL~\cite{abl2021}.
\begin{enumerate}[label=\arabic*.]
    \item \textbf{ABL.}
    ABL is motivated by the observation that backdoor patterns are learned
    significantly faster than clean patterns during training, and that stronger
    attacks lead to faster convergence on poisoned data.
    Based on this, ABL proposes a two-stage anti-backdoor learning scheme
    that employs local gradient ascent (LGA) to first isolate backdoor samples
    at an early stage and then weaken the correlation between backdoor triggers
    and the target class, enabling the model to recover clean decision boundaries
    from poisoned data.
\end{enumerate}

We further include three representative robust GNNs: Randomized Smoothing (RS)~\cite{rs2021}, GNNGuard~\cite {gnnguard2020}, and RobustGCN~\cite{robustgcn2019}.
\begin{enumerate}[label=\arabic*.]
    \item \textbf{RS.}
    RS was originally proposed to defend against adversarial structural perturbations.
    The core idea is to construct a smoothed classifier by randomly dropping edges and aggregating predictions over multiple randomized graph instances. 
    Following~\cite{rigbd2024}, we adopt this method as a baseline and set the edge
    drop ratio to be $0.5$ to balance defense effectiveness and clean accuracy.
    \item \textbf{GNNGaurd.}
    GNNGuard adversarial structural perturbations by leveraging node similarity to reweight and prune edges. By dynamically adjusting edge importance during message passing, it suppresses the influence of adversarial connections and enables more robust propagation.
    \item \textbf{RobustGCN.}
   RobustGCN improves the robustness of GCNs against adversarial attacks by modeling node representations as Gaussian distributions rather than deterministic vectors. Adversarial perturbations are absorbed into the variances of these distributions, thereby reducing their impact on the learned representations. In addition, RobustGCN introduces a variance-based attention mechanism that assigns lower aggregation weights to uncertain neighborhoods, effectively suppressing the propagation of adversarial effects through message passing.
\end{enumerate}

\subsection{Implementation Details.}
\label{app:implementations}
\myparagraph{Computing Infrastructure.}
All methods are implemented using NumPy~1.26.3, PyTorch~2.7.1, and
PyTorch Geometric~2.6.1.
Experiments are conducted on a Linux server equipped with two Intel Xeon Silver 4314 CPUs and an NVIDIA RTX A5000 GPU with 24\,GB of memory.

\myparagraph{Attacks.}
For subgraph-based attacks, following~\cite{ugba2023,dpgba2024,rigbd2024}, the trigger size is fixed to three nodes across all datasets.
For the feature-based attack SPEAR, following~\cite{spear2025},the trigger dimension is set to
$\max(0.02F, 5)$, where $F$ denotes the node feature dimension.
A two-layer MLP is used as the trigger generator, and a two-layer GCN serves as the surrogate model.
We tune their learning rates from $\{0.0001, 0.001, 0.01, 0.05\}$ and hidden dimensions from $\{16, 32, 64, 128\}$.
All remaining hyperparameters follow the default settings of the official
implementations.

\myparagraph{Defense.}
For CoGBD, we adopt a two-layer GCN as the graph encoder, together with two separate two-layer MLPs as decoders for node attribute and neighborhood
reconstruction.
At the detection stage, the suspicious node ratio is set to $\rho = 3\%$ by default.
During noise-aware robust training, a two-layer GCN is used as the backbone
classifier.
The reconstruction weights $\alpha$ and $\beta$ are both selected from
$\{2^{-4}, 2^{-3}, \ldots, 2^{4}\}$.
The temperature parameter $\tau$ is varied from 0.1 to 1.0 in increments of 0.1, and the unlearning weight $\lambda$ is varied from 0.0 to 1.0 with the same step size.
Each experiment is repeated five times with different random seeds, and we report the average results.

\section{Additional Experimental Results}

\subsection{Results on Clean Graph}
\label{app:clean-graph-performance}
In this section, we evaluate the clean-graph performance of \textsc{CoGBD} to verify that the proposed defense does not sacrifice predictive accuracy in the absence of backdoor attacks.
To this end, we remove all backdoor triggers from the poisoned graph and train both a standard GCN and \textsc{CoGBD} on the resulting clean graph.
The results are reported in Table~\ref{tab:clean}.
As shown in the table, \textsc{CoGBD} achieves clean accuracy that is comparable to, and in some cases slightly higher than, that of the vanilla GCN.
This observation indicates that the proposed defense preserves effective supervision when no malicious perturbations are present.
The underlying reason lies in the behavior of the noise-aware robust objective in Eq.~\eqref{eq:robust}.
On clean graphs, reconstruction errors exhibit limited discriminability across nodes, which leads to nearly uniform noise-aware weights.
Consequently, the objective performs only mild reweighting instead of aggressively down-weighting or removing nodes.
This soft reweighting preserves the majority of informative training signals, thereby maintaining high clean accuracy.

\begin{table}[t]
\centering
\caption{Comparison of clean accuracy between GCN and CoGBD trained on clean graphs.}
\label{tab:clean}
\normalsize
\renewcommand{\arraystretch}{1.0}
\setlength{\tabcolsep}{6pt}

\begin{tabular}{
C{1.7cm}
C{1cm}
C{1cm}
C{1cm}
C{1.6cm}
}
\hline
\textbf{Model}
& \textbf{Cora}
& \textbf{Pubmed}
& \textbf{Flickr}
& \textbf{OGB-arxiv} \\
\hline
GCN  & 83.93 & 85.16 & 45.09 & 65.17 \\
\rowcolor[gray]{0.9}
CoGBD & 83.56  & 84.32 & 44.65 & 65.43    \\
\hline
\end{tabular}
\end{table}

\subsection{Performance with Different Attack Budgets}
\label{app:attack-budget}
In this section, we study the defense performance and detection ability under varying attack budgets, including the number of poisoned target nodes and the size of triggers.

\myparagraph{Varying the Number of Target Nodes.}
We first examine how increasing the number of poisoned target nodes affects the performance of CoGBD.
In particular, we set the number of $\mathcal{V}_\mathrm{B}$ to $\{ 113, 339, 565, 791, 1017  \}$ for OGB-arxiv under four representative GBAs.
We report the ASR before and after defense, together with ACC and the recall of target and trigger nodes.
The results are shown in Table~\ref{tab:sensitivity-number-target}.
From the table, we observe:
(1) CoGBD consistently achieves high recall for both target nodes and trigger nodes across different numbers of target nodes. In most cases, the model prioritizes the identification of trigger nodes, which can be attributed to the fact that trigger nodes in subgraph-based attacks
typically exhibit more pronounced feature-based homophily discontinuities than target nodes, making them easier to detect.
(2) CoGBD maintains low attack success rates and comparable clean accuracy throughout all settings. Even when the number of target nodes increases to 1024, where a slight decline in target-node recall can be observed, the overall defense performance remains robust.
This stability stems from the proposed noise-aware training strategy, which mitigates the impact of detection noise and prevents performance degradation under increasing attack strength.

\begin{table}[t]
\centering
\caption{
Results for defense and backdoors detection with different $|\mathcal{V}_\mathbf{B}|$.}
\label{tab:sensitivity-number-target}
\small
\renewcommand{\arraystretch}{1.0}
\setlength{\tabcolsep}{6pt}
\begin{tabular}{
C{0.7cm}
C{0.8cm}
C{2.0cm}
C{0.6cm}
C{0.9cm}
C{0.9cm}
}
\hline
\textbf{Attacks} 
& \textbf{$|\mathcal{V}_\mathrm{B}|$} 
& \textbf{ASR} 
& \textbf{ACC} 
& \textbf{Recall$_{\text{tar}}$} 
& \textbf{Recall$_{\text{tri}}$} \\
\hline
\multirow{5}{*}{\textbf{GTA}}
& 113  & 95.44 | 06.10 & 65.63 & 89.20  & 100.00 \\
& 339  & 97.97 | 00.00  & 65.20 & 100.00 & 100.00\\
& 565  & 99.96 | 00.01 & 65.20 & 100.00 & 100.00\\
& 791  & 99.97 | 00.27 & 65.02 & 100.00 & 100.00 \\
& 1,017 & 99.93 | 00.27 & 64.92 & 97.92  & 100.00\\
\midrule 
\multirow{5}{*}{\textbf{UGBA}}
& 113  & 93.42 | 03.55 & 65.26 & 80.01 & 100.00\\
& 339  & 95.49 | 03.09 & 65.29 & 78.88 & 100.00\\
& 565  & 97.55 | 00.98 & 65.37 & 65.55 & 100.00\\
& 791  & 98.26 | 00.66 & 65.49 & 66.89 & 100.00\\
& 1,017 & 98.24 | 05.41 & 64.35 & 59.00 & 100.00 \\
\midrule 
\multirow{5}{*}{\textbf{DPGBA}}
& 113  & 93.69 | 00.00 & 65.05 & 84.07 & 100.00 \\
& 339  & 95.53 | 00.00 & 65.90 & 84.42 & 100.00\\
& 565  & 96.32 | 00.00 & 65.89 & 77.01 & 69.56\\
& 791  & 97.27 | 00.00 & 65.88 & 71.08 & 72.06\\
& 1,017 & 97.29 | 00.00 & 65.00 & 73.19 & 67.65\\
\midrule 
\multirow{5}{*}{\textbf{SPEAR}}
& 113  & 96.29 | 00.06 & 66.83 & 99.07 & - \\
& 339  & 96.72 | 00.03 & 66.73 & 99.70 & - \\
& 565  & 96.78 | 00.00  & 66.66 & 94.73 & - \\
& 791  & 97.37 | 00.91 & 66.75 & 85.15 & - \\
& 1,017 & 97.86 | 03.90 & 66.78 & 76.59 & - \\
\hline
\end{tabular}
\vspace{-1em}
\end{table}

\myparagraph{Varying the Size of Triggers.}
We now conduct experiments to investigate how different trigger sizes impact the performance of CoGBD.
Specifically, we set the trigger size to $\{1, 2, 3, 4, 5 \}$ and conducted experiments on the OGB-arxiv dataset. 
We report the ASR before and after defense, together with ACC and the recall of target and trigger nodes.
The results are in Tab~\ref{tab:sensitivity-trigger-size}. 
From the table, we observe that CoGBD consistently achieves low ASR and high clean accuracy across all trigger sizes, demonstrating effective defense without
sacrificing benign performance.
As the trigger size increases, the attack success rate tends to rise; however, larger triggers are also less stealthy.
This reduced stealthiness is reflected by the increasing recall of both trigger nodes and target nodes, as larger trigger subgraphs induce more pronounced structural--semantic inconsistencies that are easier to detect.
When the trigger size is small (e.g., trigger size $=1$), the attack becomes more stealthy. In this case, subgraph-based GBAs introduce weaker semantic discontinuities on target nodes, making them harder to identify, as evidenced by the lower $\mathrm{Recall}_\mathrm{tar}$ on GTA and UGBA (e.g., 35.83\% and 51.95\%, respectively).
Despite this, CoGBD still maintains strong defense performance, which can be attributed to the proposed noise-aware mechanism that effectively mitigates the impact of detection error.

\begin{table}[t]
\centering
\caption{\label{tab:sensitivity-trigger-size}
Results for defense and backdoors detection with different trigger size.}
\small
\renewcommand{\arraystretch}{1.0}
\setlength{\tabcolsep}{6pt}
\begin{tabular}{
C{0.6cm}
C{1.5cm}
C{1.6cm}
C{0.6cm}
C{0.8cm}
C{0.8cm}
}
\hline
\textbf{Attacks} 
& \textbf{Trigger Size} 
& \textbf{ASR} 
& \textbf{ACC} 
& \textbf{Recall$_{\text{tar}}$} 
& \textbf{Recall$_{\text{tri}}$} \\
\hline
\multirow{5}{*}{\textbf{GTA}}
& 1  & 93.19 | 00.12  & 63.95 & 35.83  & 100.00 \\
& 2  & 99.89 | 00.00  & 64.02 & 100.00 & 100.00 \\
& 3  & 100.0 | 00.00  & 64.44 & 100.00 & 100.00\\
& 4  & 100.0 | 00.01  & 64.44 & 100.00 & 100.00\\
& 5  & 99.98 | 00.00  & 64.08 & 99.82  & 100.00\\
\midrule 
\multirow{5}{*}{\textbf{UGBA}}
& 1  & 83.54 | 01.80 & 64.29 & 51.95 & 100.00 \\
& 2  & 97.09 | 00.21 & 64.48 & 80.08  & 100.00\\
& 3  & 97.77 | 02.03 & 64.33 & 90.12  & 100.00\\
& 4  & 97.84 | 00.82 & 64.51 & 90.18 & 100.00 \\
& 5  & 97.73 | 05.20 & 64.26 & 90.12 & 100.00\\
\midrule 
\multirow{5}{*}{\textbf{DPGBA}}
& 1  & 92.47 | 00.00  & 64.89 & 61.95  & 100.00 \\
& 2  & 95.10 | 00.00  & 64.91 & 65.96 & 93.98\\
& 3  & 96.16 | 00.01 & 65.02 & 77.01 & 69.56\\
& 4  & 97.10 | 00.00  & 65.08 & 88.59 & 19.65 \\
& 5  & 97.65 | 00.01 & 65.02 & 93.76 & 18.23\\
\midrule 
\multirow{5}{*}{\textbf{SPEAR}}
& 1  & 96.17 | 00.01 & 66.72 & 89.92 & -\\
& 2  & 96.29 | 00.03 & 66.73 & 95.64 & -\\
& 3  & 96.61 | 00.00  & 66.68 & 95.27 & -\\
& 4  & 97.09 | 00.00  & 66.79 & 97.82 & -\\
& 5  & 97.88 | 00.01 & 66.72 & 97.45 & -\\
\hline
\end{tabular}
\vspace{-1em}
\end{table}

\subsection{More Results of the Ability to Detect Backdoors}\label{app:detect}
In this section, we provide additional results of the ability to detect backdoors on the Cora, PubMed, and Flickr datasets.
We present the $\mathrm{Recall}_\mathrm{tar}$ and $\mathrm{Recall}_\mathrm{tri}$, together with Clean ACC(which is trained on the clean graph), ASR, and ACC as reference metrics. The results are shown in Table~\ref{tab:recall-other-dataset}.
From the table, we make two main observations:
(1) CoGBD consistently achieves high recall on target nodes across different datasets and attack types. This strong target-node detection capability plays a dominant role in effectively suppressing ASR, as identifying and mitigates the influence of
target nodes, directly disrupt the backdoor attack objective.
(2) In some cases, the recall of trigger nodes is relatively lower, such as 33.33\% on Pubmed under UGBA.
This behavior can be attributed to the increased stealthiness of triggers on certain datasets, where trigger nodes exhibit weaker structural--semantic
irregularities and are therefore harder to distinguish from clean nodes. Nevertheless, the overall defense performance of CoGBD is not compromised, as the proposed noise-aware training mechanism can still effectively mitigate backdoor effects under imperfect detection.

\begin{table}[t]
\centering
\caption{\label{tab:recall-other-dataset}
Results for the ability to detect Backdoors.}
\vspace{-1em}
\small
\renewcommand{\arraystretch}{1.15}
\setlength{\tabcolsep}{6pt}
\begin{tabular}{
C{0.7cm}
C{0.8cm}
C{0.8cm}
C{0.6cm}
C{0.6cm}
C{1.0cm}
C{1.0cm}
}
\hline
\textbf{Attacks} 
& \textbf{Dataset} 
& \textbf{Clean ACC} 
& \textbf{ASR} 
& \textbf{ACC} 
& \textbf{Recall$_{\text{tar}}$} 
& \textbf{Recall$_{\text{tri}}$} \\
\hline
\multirow{3}{*}{\textbf{GTA}}
& Cora      & 83.93 & 0.0  & 83.48 & 100.00 & 33.33 \\
& Pubmed    & 85.16 & 0.89 & 85.20 & 76.00  & 100.00 \\
& Flickr    & 45.09 & 0.0  & 45.36 & 96.75  & 33.33 \\
\midrule 
\multirow{3}{*}{\textbf{UGBA}}
& Cora      & 83.93 & 0.0 & 84.30  & 80.0 & 70.83 \\
& Pubmed    & 85.16 & 1.12 & 84.33 & 88.81 & 33.33  \\
& Flickr    & 45.09 & 0.0 & 44.36  & 100.00 & 33.33 \\
\midrule 
\multirow{3}{*}{\textbf{DPGBA}}
& Cora      & 83.93 & 0.0 & 84.67 & 74.24 & 64.17 \\
& Pubmed    & 85.16 & 0.0 & 85.47 & 78.20 & 100.00 \\
& Flickr    & 45.09 & 0.0 & 43.19 & 85.62 & 100.00 \\
\midrule 
\multirow{3}{*}{\textbf{SPEAR}}
& Cora      & 83.93 & 0.0  & 83.41 & 100.00 & - \\
& Pubmed    & 85.16 & 7.27 & 84.76 & 100.00 & - \\
& Flickr    & 45.09 & 0.0  & 43.62 & 100.00 & - \\
\hline
\end{tabular}
\vspace{-1em}
\end{table}

\begin{figure*}[t]
  \centering
  \begin{subfigure}[t]{0.24\textwidth}
    \centering
    \includegraphics[width=\linewidth]{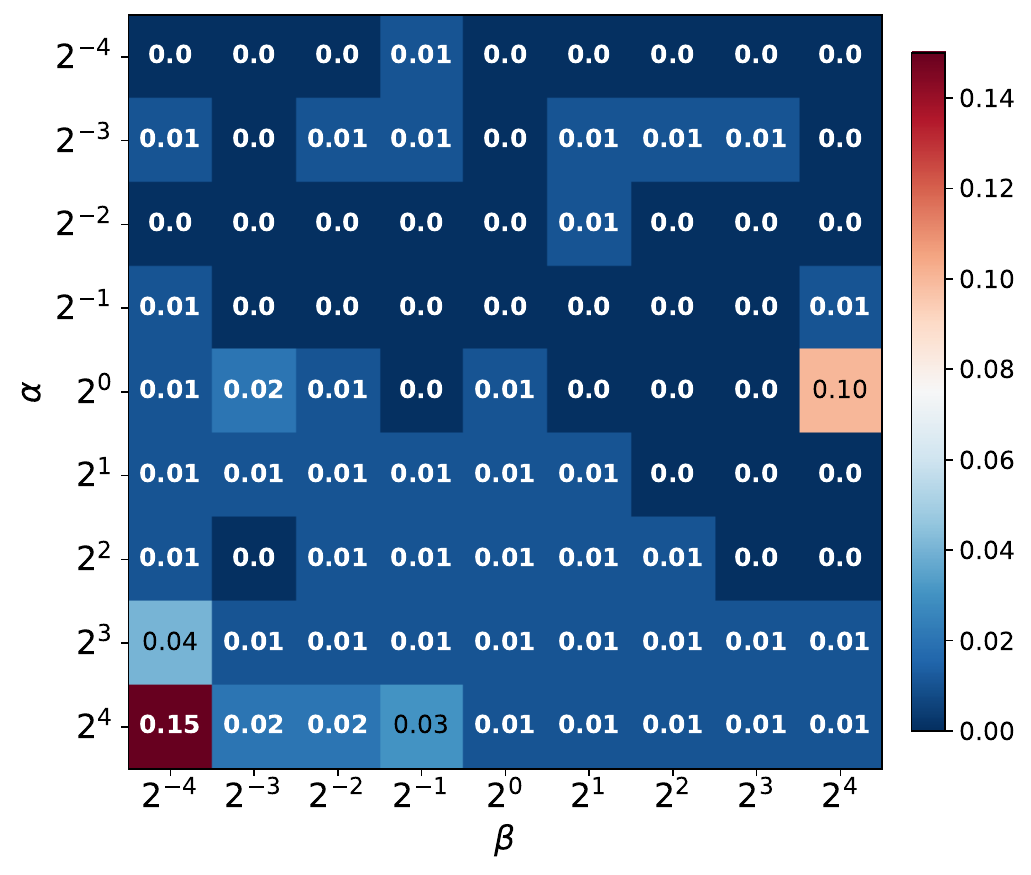}
    \caption{Heatmap of ASR.}
  \end{subfigure}
  \begin{subfigure}[t]{0.24\textwidth}
    \centering
    \includegraphics[width=\linewidth]{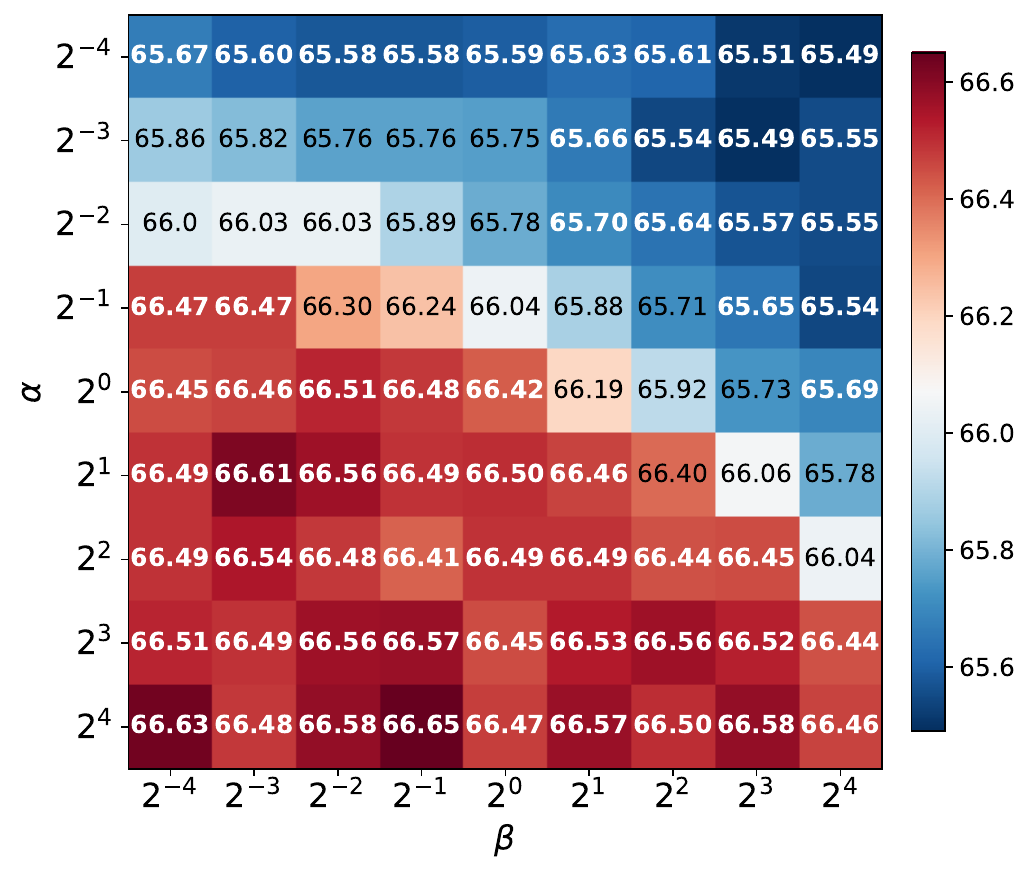}
    \caption{Heatmap of ACC.}
  \end{subfigure}
  \begin{subfigure}[t]{0.24\textwidth}
    \centering
    \includegraphics[width=\linewidth]{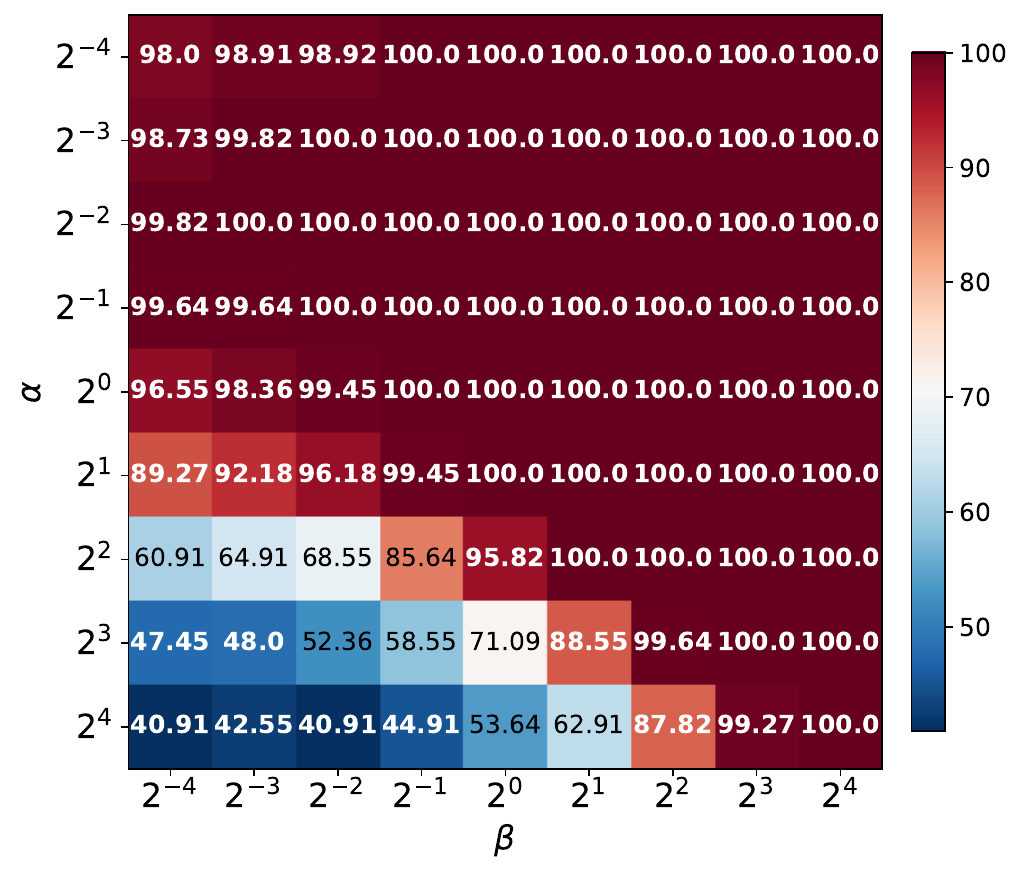}
    \caption{Heatmap of Recall.}
  \end{subfigure}
  \begin{subfigure}[t]{0.24\textwidth}
    \centering
    \includegraphics[width=\linewidth]{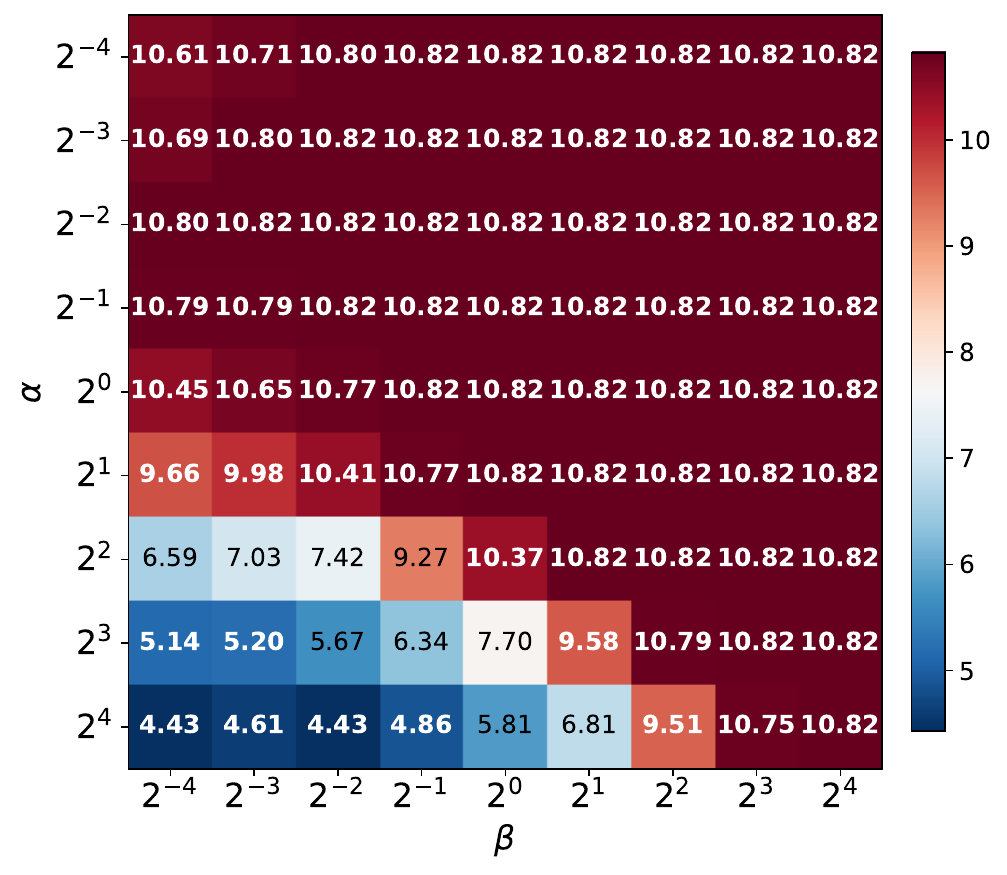}
    \caption{Heatmap of Precision.}
  \end{subfigure}
  \caption{Sensitivity analysis of weights: $\alpha$ and $\beta$.}
  \label{fig:sensitivity-spear}
\end{figure*}

\subsection{More Experiments of Parameter Sensitivity }
\label{app:more-senstivity}
\myparagraph{Parameter Sensitivity of $\alpha$ and $\beta$ under Feature-based GBAs.}
\label{app:sensitivity-a-b-spear}
Following the same setup as in Sec.~\ref{sec:sensitivity}, we vary the reconstruction
weights $\alpha$ and $\beta$ from $2^{-4}$ to $2^{4}$ and evaluate CoGBD under the feature-based SPEAR attack on OGB-arxiv.
Fig.~\ref{fig:sensitivity-spear} reports ASR, ACC, Recall$_{\text{tar}}$, and Precision.
Across a wide range of $\alpha$ and $\beta$, CoGBD consistently maintains near-zero ASR. This demonstrates the strong robustness of our CoGBD under feature-based GBAs.
This behavior stems from the nature of SPEAR, which injects triggers directly into node attributes while preserving graph structure, thereby introducing pronounced structure--feature inconsistencies around poisoned nodes.
All three reconstruction signals in CoGBD are inherently sensitive to such inconsistencies, enabling effective detection of poisoned nodes and reliable suppression of backdoor activation.
We also observe that detection precision can be relatively low due to the small fraction of poisoned nodes. Nevertheless, CoGBD preserves high clean accuracy across all settings, indicating
that the noise-aware robust training stage is tolerant to noisy detection results and prevents excessive penalization of clean nodes.
\begin{table}[t]
\centering
\caption{
Results for defense and backdoors detection with different $\rho$.}
\small
\renewcommand{\arraystretch}{1.0}
\setlength{\tabcolsep}{6pt}
\begin{tabular}{
C{0.8cm}
C{0.8cm}
C{1.0cm}
C{1.0cm}
C{1.1cm}
C{1.1cm}
}
\hline
\textbf{Attack} 
& \textbf{$\rho\%$} 
& \textbf{ASR} 
& \textbf{ACC} 
& \textbf{Recall$_{\text{tar}}$} 
& \textbf{Recall$_{\text{tri}}$} \\
\hline
\multirow{5}{*}{\textbf{GTA}}
& 1\%  & 0.00   & 65.08  & 2.85 & 100.00   \\
& 3\%  & 0.00   & 65.11  & 100.00  & 100.00   \\
& 5\%  & 0.00   & 65.18  & 100.00  & 100.00 \\
& 8\%  & 0.00   & 64.26  & 100.00  & 100.00    \\
& 10\%  & 24.37  & 63.55  & 100.00  & 100.00    \\
\midrule 
\multirow{5}{*}{\textbf{UGBA}}
& 1\%  & 0.00  & 65.28  & 0.00 & 100.00   \\
& 3\%  & 0.09 & 65.38 & 91.27 & 100.00   \\
& 5\%  & 0.87 & 64.27 & 91.88 &  100.00  \\
& 8\%  & 0.18 & 64.19 & 96.42 & 100.00   \\
& 10\%  & 0.30 & 64.13 & 97.77 & 100.00   \\
\midrule 
\multirow{5}{*}{\textbf{DPGBA}}
& 1\%  & 0.00  & 65.82 & 52.58 & 18.05   \\
& 3\%  & 0.00  & 65.97 & 80.01 & 69.56   \\
& 5\%  & 0.01 & 64.00 & 85.03 & 88.50   \\
& 8\%  & 0.01 & 64.13 & 88.95 & 97.35   \\
& 10\%  & 0.01 & 64.16 & 90.02 & 98.58   \\
\midrule 
\multirow{5}{*}{\textbf{SPEAR}}
& 1\%  & 0.00 & 66.60 & 100.00 & --   \\
& 3\%  & 0.00 & 66.35 & 100.00 & --  \\
& 5\%  & 0.00 & 66.42 & 100.00 & --   \\
& 8\%  & 0.00 & 66.49 & 100.00 & --  \\
& 10\%  & 0.00 & 66.31 & 100.00 & --   \\
\hline
\end{tabular}
\vspace{-1em}
\label{tab:sensitivity-rho}
\end{table}

\myparagraph{Parameter Sensitivity Analysis of $\rho$.}
\label{app:sensitivity-rho}
We analyze the effect of $\rho$, which controls the proportion of nodes treated as suspicious during detection.
Specifically, we vary $\rho$ over $\{10\%, 8\%, 5\%, 3\%, 1\%\}$ and report ASR, ACC, $\mathrm{Recall}_{\mathrm{tar}}$, and $\mathrm{Recall}_{\mathrm{tri}}$ on OGB-arxiv in Table~\ref{tab:sensitivity-rho}.
Our  results show that, as $\rho$ decreases, the recall on both poisoned target nodes and trigger nodes naturally declines, since fewer nodes are inspected as potential anomalies.
Notably, this reduction in recall does not necessarily impair the ability of CoGBD to suppress ASR, indicating that effective backdoor mitigation does not require exhaustive detection of all backdoors.
When $\rho$ is too large (e.g., $10\%$), excessive false positives are introduced, which injects training noise and can degrade robustness(e.g., a ASR of $24.37\%$ under GTA).
Conversely, an overly small $\rho$ (e.g., $1\%$) leads to more false negatives, resulting in missed poisoned targets (e.g., $\mathrm{Recall}_{\mathrm{tar}}=0$ under UGBA).
Overall, these results suggest that CoGBD is robust to partial detection and can effectively eliminate backdoor effects without identifying all anomalous patterns.
In our experiments, $\rho=3\%$ provides a favorable trade-off between detection coverage and robustness across most GBA settings.

\myparagraph{Parameter Sensitivity Analysis of $\tau$.}
\label{app:sensitivity-tau}
We study the effect of $\tau$, which controls the sharpness of the noise-aware node weights.
Specifically, we vary $\tau$ from $0.1$ to $1.0$ with a step size of $0.1$ and evaluate
\textsc{CoGBD} on OGB-arxiv under four representative GBAs, reporting ASR and ACC in
Fig.~\ref{fig:sensitivity-tau}.
As $\tau$ increases, ASR first decreases and then increases, while ACC exhibits the
opposite trend, indicating a clear trade-off between backdoor suppression and clean
performance.
When $\tau$ is too small, node confidence scores become insufficiently
discriminative, leading to weaker suppression of poisoned nodes and higher ASR
(e.g., $5.81\%$ ASR on UGBA at $\tau=0.1$).
In contrast, overly large $\tau$ enforces excessively sharp penalties on suspicious
nodes, amplifying the impact of false positives and introducing training noise, which
again degrades robustness (e.g., $6.07\%$ ASR on UGBA at $\tau=1.0$).
Overall, moderate values of $\tau$ provide the best balance between robustness and
accuracy.
In our experiments, $\tau \in [0.4, 0.6]$ consistently achieves low ASR while
preserving high clean accuracy across different attack settings.
\begin{figure}
  \centering
  \begin{subfigure}[b]{0.48\linewidth}
    \centering
    \includegraphics[width=\linewidth]{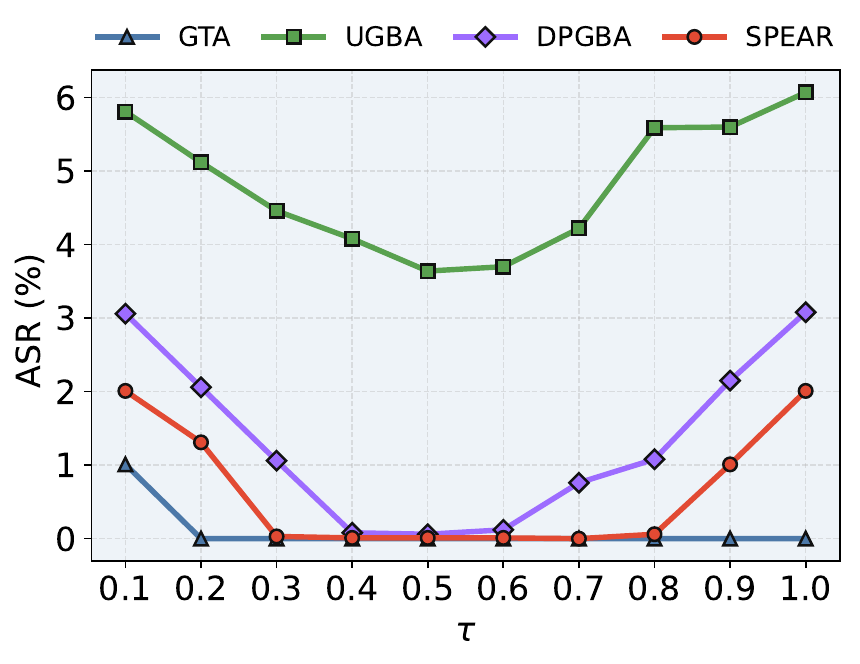}
    \caption{The results of ASR(\%).}
  \end{subfigure}
  \hfill
  \begin{subfigure}[b]{0.48\linewidth}
    \centering
    \includegraphics[width=\linewidth]{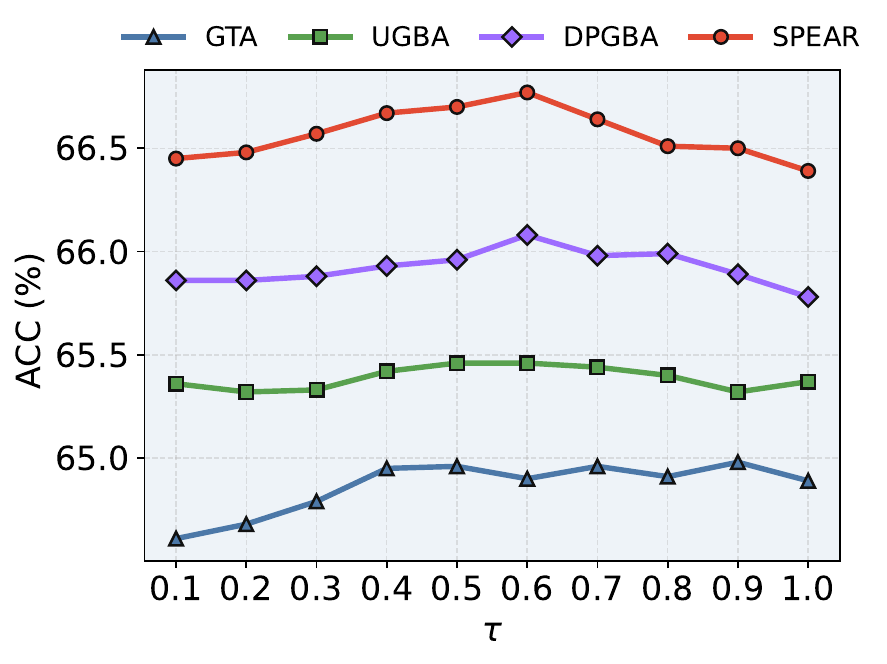}
    \caption{The results of ACC(\%).}
  \end{subfigure}
  \caption{Sensitivity analysis of $\tau$. }
  \label{fig:sensitivity-tau}
\end{figure}

\begin{table*}
\centering
\caption{Results of different GNN backbones with CoGBD.}
\label{tab:backbone}
\normalsize
\renewcommand{\arraystretch}{1.0}
\setlength{\tabcolsep}{3.0pt}

\newcommand{\HL}[1]{\cellcolor{gray!15}{#1}}

\begin{tabular}{%
C{1.2cm}|L{2.2cm}|
C{1.4cm}C{1.4cm}|
C{1.4cm}C{1.4cm}|
C{1.4cm}C{1.4cm}|
C{1.4cm}C{1.4cm}
}
\toprule
\multirow{2}{*}{\textbf{Attacks}} &
\multirow{2}{*}{\textbf{Defense}} &
\multicolumn{2}{c|}{\textbf{Cora}} &
\multicolumn{2}{c|}{\textbf{Pubmed}} &
\multicolumn{2}{c|}{\textbf{Flickr}} &
\multicolumn{2}{c}{\textbf{OGB-arxiv}} \\
\cline{3-10}
& &
\textbf{ASR(\%)}$\down$ & \textbf{ACC(\%)}$\up$ &
\textbf{ASR(\%)}$\down$ & \textbf{ACC(\%)}$\up$ &
\textbf{ASR(\%)}$\down$ & \textbf{ACC(\%)}$\up$ &
\textbf{ASR(\%)}$\down$ & \textbf{ACC(\%)}$\up$ \\
\midrule

\multirow{4}{*}{\textbf{GTA}}
& GAT              & 82.66 & 82.46 & 90.77 & 82.60 & 100.00 & 40.05 & 90.69 & 63.22 \\
& GraphSAGE        & 96.96 & 81.22 & 99.98 & 84.57 & 100.00 & 46.04 & 96.62 & 67.05 \\
& \HL{CoGBD$_\mathrm{GAT}$}        & \HL{7.49} & \HL{83.56} & \HL{1.22} & \HL{82.87} & \HL{0.00} & \HL{40.35} & \HL{0.19} & \HL{63.66} \\
& \HL{CoGBD$_\mathrm{GraphSAGE}$}  & \HL{2.29} & \HL{82.89} & \HL{4.10} & \HL{85.57} & \HL{0.00} & \HL{45.87} & \HL{0.00} & \HL{67.34} \\
\midrule

\multirow{4}{*}{\textbf{UGBA}}
& GAT              & 100.00 & 84.89 & 98.67 & 83.05 & 100.00 & 40.22 & 90.68 & 63.03 \\
& GraphSAGE        & 97.78  & 81.12 & 99.08 & 84.05 & 100.00 & 45.15 & 98.66 & 64.88 \\
& \HL{CoGBD$_\mathrm{GAT}$}        & \HL{3.91} & \HL{84.15} & \HL{5.13} & \HL{82.84} & \HL{0.00} & \HL{40.35} & \HL{0.00} & \HL{64.60} \\
& \HL{CoGBD$_\mathrm{GraphSAGE}$}  & \HL{3.70} & \HL{82.81} & \HL{5.97} & \HL{83.36} & \HL{0.00} & \HL{46.06} & \HL{1.47} & \HL{66.95} \\
\midrule

\multirow{4}{*}{\textbf{DPGBA}}
& GAT              & 91.08 & 83.33 & 95.54 & 83.56 & 100.00 & 40.35 & 91.98 & 63.02 \\
& GraphSAGE        & 90.48 & 83.35 & 93.08 & 84.01 & 100.00 & 44.35 & 90.04 & 64.56 \\
& \HL{CoGBD$_\mathrm{GAT}$}        & \HL{2.41} & \HL{83.78} & \HL{5.45} & \HL{81.60} & \HL{0.00} & \HL{40.35} & \HL{0.04} & \HL{64.11} \\
& \HL{CoGBD$_\mathrm{GraphSAGE}$}  & \HL{4.80} & \HL{82.15} & \HL{1.76} & \HL{85.40} & \HL{0.00} & \HL{43.64} & \HL{0.00} & \HL{68.02} \\
\midrule

\multirow{4}{*}{\textbf{SPEAR}}
& GAT              & 98.82 & 82.45 & 95.03 & 85.06 & 100.00 & 40.33 & 96.95 & 64.56 \\
& GraphSAGE        & 96.72 & 84.08 & 92.43 & 85.42 & 100.00 & 44.46 & 97.88 & 65.55 \\
& \HL{CoGBD$_\mathrm{GAT}$}        & \HL{0.00} & \HL{82.44} & \HL{7.40} & \HL{83.03} & \HL{0.00} & \HL{40.35} & \HL{0.37} & \HL{64.39} \\
& \HL{CoGBD$_\mathrm{GraphSAGE}$}  & \HL{3.47} & \HL{84.33} & \HL{3.83} & \HL{85.12} & \HL{0.00} & \HL{46.21} & \HL{0.02} & \HL{67.37} \\
\bottomrule
\end{tabular}
\end{table*}

\subsection{Impact of GNN backbones.}\label{app:backbones}
To evaluate the flexibility of \textsc{CoGBD} across different GNN architectures,
we replace the default GCN backbone in the noise-aware robust training stage with
alternative backbones, including GAT~\cite{gat2018} and GraphSAGE~\cite{sage2017}.
The results are reported in Table~\ref{tab:backbone}.
Across all backbones and datasets, CoGBD consistently achieves low attack
success rates while maintaining clean accuracy that is comparable to, and in some
cases slightly higher than, the corresponding vanilla GNNs.
These results indicate that the CoGBD does not rely on architecture-specific
properties and can be seamlessly integrated with diverse GNN backbones.
Overall, CoGBD exhibits strong flexibility and transferability as a general
backdoor defense framework.

\section{Additional Related Works}

\myparagraph{Graph Backdoor Attacks.}
Recent studies~\cite{SBA2021,gta2021,ugba2023,dpgba2024,spear2025} have demonstrated that
graph neural networks (GNNs) are vulnerable to backdoor attacks, posing serious security
risks in high-stakes domains where incorrect predictions may lead to severe consequences.
In a graph backdoor attack (GBA), an adversary implants trigger patterns into a small subset
of training nodes such that the trained model associates the trigger with a predefined
target label.
At test time, nodes carrying the trigger are misclassified into the target class, while
predictions on clean nodes remain largely unaffected.
In backdoor attacks, triggers are typically constructed from the basic elements of data
samples.
Analogous to pixel-level visual patterns in computer vision~\cite{image-backdoor2017,image-backdoor2019,image-backdoor2022}
and token-level triggers in natural language processing~\cite{oion2021,killer2021},
subgraphs naturally serve as triggers in graph data~\cite{SBA2021,gta2021,NFTA2023,motif2024}. 
Accordingly, most existing GBAs instantiate triggers as subgraphs by manipulating local connectivity patterns.
SBA~\cite{SBA2021} injects a fixed or sampled subgraph triggers, whereas GTA~\cite{gta2021} learns adaptive, sample-specific triggers via a generator.
UGBA~\cite{ugba2023} further enhances attack stealthiness by explicitly enforcing feature similarity between trigger nodes and the attached poisoned target nodes.
Subsequent studies reveal that many such subgraph-based triggers remain out-of-distribution and are thus detectable~\cite{dpgba2024}, motivating DPGBA~\cite{dpgba2024} to generate in-distribution triggers through adversarial optimization.
Beyond structural manipulation, feature-based GBAs inject triggers directly into node attributes while preserving the original graph topology.
SPEAR~\cite{spear2025} exemplifies this direction and poses greater challenges to existing defenses, as it introduces no explicit structural anomalies and relies purely on feature-level perturbations.
From a supervision perspective, the aforementioned methods primarily operate under the dirty-label setting, where poisoned nodes are explicitly relabeled to the target class during training.
This setting simplifies attack construction but is less stealthy due to explicit label manipulation.
More recently, several works~\cite{poster2022,cgba2025} have explored the
clean-label setting, in which training labels remain unchanged.
Instead, attackers rely on subtle trigger-induced representation shifts to cause misclassification at test time.
Moreover, recent studies~\cite{mlgba2024,eumc2024} extend GBAs to the multi-target setting, where multiple target labels are attacked simultaneously. Compared with single-target attacks, multi-target GBAs require learning more complex and potentially overlapping trigger–label associations, which increases both the attack surface and the difficulty of defense. This setting further exacerbates the challenge for existing defenses, as abnormal patterns may be distributed across different target classes rather than concentrated around a single label.


\end{document}